\def\rank{\mathrm{rank}}
\def\supp{\mathrm{supp}}
\newcommand{\Tr}{\mathrm{Tr}}
\newcounter{defin}  \newcounter{lemma}  \newcounter{theorem}
\newcounter{property} \newcounter{corol}  \newcounter{remark} \newcounter{example}
\newenvironment{lemma}{\par\refstepcounter{lemma}
     \textbf{Lemma \thelemma.} }{\rm\par}
\newenvironment{theorem}{\par\refstepcounter{theorem}
     \textbf{Theorem \thetheorem.}\ }{\rm\par}
\newenvironment{property}{\par\refstepcounter{property}
     \textbf{Proposition \theproperty.}\ }{\rm\par}
\newenvironment{corollary}{\par\refstepcounter{corol}
     \textbf{Corollary \thecorol.} }{\rm\par}
\newenvironment{definition}{\par\refstepcounter{defin}
     \textbf{Definition \thedefin.}\ }{\rm\par}
\newenvironment{remark}{\par\refstepcounter{remark}
     \textbf{Remark \theremark.}}{\rm\par}
\newenvironment{example}{\par\refstepcounter{example}
     \textbf{Example \theexample.}}{\rm\par}
\begin{document}
\title{Conditions for equality between entanglement-assisted and unassisted classical capacities of a quantum channel}
\author{M.E. Shirokov\\
Steklov Mathematical Institute, RAS, Moscow\\
msh@mi.ras.ru}
\date{}
\maketitle

\begin{abstract}
Several relations between the Holevo capacity and the entanglement-assisted classical capacity of a quantum channel are proved, necessary and sufficient conditions for their coincidence are obtained. In particular, it is shown that these capacities coincide if (correspondingly, only if) the channel  (correspondingly, the $\chi$--essential part of the channel) belongs to the class of classical-quantum channels (the $\chi$-essential part is a restriction of a channel obtained by discarding all states useless for transmission of classical information). The obtained conditions and their corollaries are generalized to channels with linear constraints. By using these conditions it is shown that the question of coincidence of
the Holevo capacity and the entanglement-assisted classical capacity depends on the constraint (even for classical-quantum channels).

Properties of the difference between the quantum mutual information and the $\chi$-function (constrained Holevo capacity) of a quantum channel are explored.
\end{abstract}

\section{Introduction}

Informational properties of a quantum channel are characterized by a
number of different capacities defined by type of transmitted
information, by additional resources used to increase the rate of
this transmission, by security requirements, etc.

Central roles in analysis of transmission of  classical
information through a quantum channel $\Phi$ are played by the
Holevo capacity $\bar{C}(\Phi)$, the classical (unassisted) capacity
$C(\Phi)$ and the entanglement-assisted (classical) capacity
$C_{\mathrm{ea}}(\Phi)$ of this channel. The first of them is
defined as the maximal rate of information transmission between
transmitter and receiver (generally called Alice and Bob) when
nonentangled block coding is used by Alice and arbitrary measurement
is used by Bob, the second one  differs form the first by
possibility to use arbitrary block coding by Alice while the
entanglement-assisted capacity is defined as the maximal rate of
information transmission between Alice and Bob under the assumption
that they share a common entangled state, which can be used in block
coding by Alice to increase the rate of information transmission~\cite{BSST,N&Ch}.

By the operational definitions $\bar{C}(\Phi)\le C(\Phi)\le C_{\mathrm{ea}}(\Phi)$. During a long time it was conjectured that $\bar{C}(\Phi)=C(\Phi)$ for any channel $\Phi$ until Hastings showed
existence of a counter-example to the additivity conjecture~\cite{H}. Nevertheless, the equality $\bar{C}(\Phi)=C(\Phi)$ holds
for a large class of channels including the noiseless channel, all
unital qubit channels, all entanglement-breaking channels and many
other concrete examples. In contract to this, possibility of the
strict inequality $C(\Phi)< C_{\mathrm{ea}}(\Phi)$ was initially
obvious, since the superdense coding implies that
$C_{\mathrm{ea}}(\Phi)=2C(\Phi)>0$ if $\Phi$ is the noiseless
channel. But there exist channels, for which
\begin{equation}\label{coincidence}
\bar{C}(\Phi)=C(\Phi)=C_{\mathrm{ea}}(\Phi)>0
\end{equation}
(as an example one can consider the channel $\rho\mapsto\sum_k\langle k|\rho|k\rangle|k\rangle\langle k|$, where
$\{|k\rangle\}$ is an orthonormal basis). Hence the question
"How can the class of channels for which~\eqref{coincidence} holds
be characterized?" naturally arises. In contrast to an intuitive
point of view this class does not coincide with the class of
entanglement-breaking channels: despite the fact that these channels
annihilate entanglement of any state shared by Alice and Bob, their
entanglement-assisted capacity may be greater then the classical
unassisted capacity~\cite{BSST}. On the other hand, in~\cite{BSST+} an example of non-entanglement-breaking channel
for which
$C_{\mathrm{ea}}(\Phi)=\bar{C}(\Phi)$ is described (see Example~\ref{n-e-b-ch} in Section~2.3 below). A step in finding
answer to the above question was recently made in~\cite{H-CQM}, where a criterion of
\eqref{coincidence} for the class of q-c
channels defined by quantum observables is obtained.

In this paper some relations between the capacities $\bar{C}(\Phi)$ and $C_{\mathrm{ea}}(\Phi)$ as well as necessary and sufficient conditions for the equality $\bar{C}(\Phi)=C_{\mathrm{ea}}(\Phi)$ are obtained
(Proposition~\ref{simple}, Theorems~\ref{coin-crit} and~\ref{e-b-ch}). In particular, it is shown that the equality $\bar{C}(\Phi)=C_{\mathrm{ea}}(\Phi)$ holds if (correspondingly, only if) the channel $\Phi$ (correspondingly, the $\chi$--essential part of the channel $\Phi$) belongs to the class of classical-quantum channels (the $\chi$-essential part is defined as a restriction of a
channel to the set of states supported by the minimal subspace containing elements of \emph{all} ensembles optimal for this channel in the sense of the Holevo capacity, see Definition
~\ref{ef-subch}).

Since in dealing with infinite dimensional channels
it is necessary to impose particular constraints on the choice of input
code-states, we also consider conditions for coincidence of the entanglement-assisted capacity
with the Holevo capacity for quantum channels with linear constraints
(Propositions~\ref{coin-crit+}~and~\ref{degradable-cv}). By using these conditions it is shown that even in the case
of classical-quantum channels the question of coincidence of the above capacities depends on the form of the constraint (Example~\ref{classical-costrained},
Proposition~\ref{e-b-ch-c+}).

In Section~4 properties of the difference between the quantum mutual information and the $\chi$-function (the constrained Holevo capacity) of a quantum channel (considered as a function of an input state) are studied (Theorem~\ref{delta-fun-p}). In particular,
the sense of the maximal value of this function as a parameter
characterizing "noise level" of a quantum channel is shown.

\section{Unconstrained channels}

Let $\mathcal{H}_A$, $\mathcal{H}_B$ and $\mathcal{H}_E$ be finite dimensional Hilbert spaces.
In what follows
$\Phi\colon\mathfrak{S}(\mathcal{H}_A)\to\mathfrak{S}(\mathcal{H}_B)$ is a quantum
channel and $\widehat{\Phi}\colon\mathfrak{S}(\mathcal{H}_A)\to\mathfrak{S}(\mathcal{H}_E)$ is its
complementary channel, defined uniquely up to unitary equivalence~\cite{H-comp-ch}.\footnote{The quantum channel $\widehat{\Phi}$ is also
called \emph{conjugate} to the channel $\Phi$ \cite{KMNR}.}\smallskip

Let $H(\rho)$ and $H(\rho\|\sigma)$ be respectively the von Neumann
entropy of the state $\rho$ and the quantum relative entropy of the
states $\rho$ and $\sigma$~\cite{N&Ch}.\smallskip

The Holevo capacity of the channel $\Phi$ can be defined as follows
\begin{equation}\label{chi-cap}
\bar{C}(\Phi)=\max_{\rho\in\mathfrak{S}(\mathcal{H}_A)}\chi^{}_\Phi(\rho),
\end{equation}
where
\begin{equation}\label{chi-fun}
\chi^{}_\Phi(\rho)=\max_{\sum_i\pi_i\rho_i=\rho}\,\sum_i\pi_i
H(\Phi(\rho_i)\|\Phi(\rho))
\end{equation}
is the $\chi$-function  of the
channel $\Phi$~\cite{H-Sh-1}. Note that
\begin{equation}\label{chi-fun+}
\chi^{}_\Phi(\rho)=H(\Phi(\rho))-\hat{H}_\Phi(\rho),
\end{equation}
where $\hat{H}_\Phi(\rho)=\min_{\sum_i\pi_i\rho_i=\rho}\,\sum_i\pi_i
H(\Phi(\rho_i))$ is the convex hull of the function $\rho\mapsto H(\Phi(\rho))$. By concavity of this function the above minimum can be taken over ensembles of pure states. An ensemble $\{\pi_i,\rho_i\}$ of pure states called \emph{optimal for the channel $\Phi$} if (cf.~\cite{Sch-West})
$$
\bar{C}(\Phi)=\chi^{}_\Phi(\bar{\rho})=\sum_i\pi_i H(\Phi(\rho_i)\|\Phi(\bar{\rho})),
\quad\bar{\rho}=\sum_i\pi_i\rho_i.
$$

By the Holevo-Schumacher-Westmoreland theorem the classical  capacity of the channel $\Phi$ can
be expressed by the following regularization formula
$$
C(\Phi)=\lim_{n\to+\infty} n^{-1}\bar{C}(\Phi^{\otimes n}).
$$

By the Bennett-Shor-Smolin-Thapliyal theorem the entanglement-assisted capacity of the
channel $\Phi$ is determined as follows
\begin{equation}\label{ea-cap}
C_{\mathrm{ea}}(\Phi)=\max_{\rho\in\mathfrak{S}(\mathcal{H}_A)}I(\rho, \Phi),
\end{equation}
where $I(\rho, \Phi)=H(\rho)+H(\Phi(\rho))-H(\widehat{\Phi}(\rho))$ is the quantum mutual information of the channel $\Phi$ at the state
$\rho$~\cite{N&Ch}.\medskip

By the operational definitions $\bar{C}(\Phi)\le C(\Phi)\le C_{\mathrm{ea}}(\Phi)$. Analytically this follows (by means of~\eqref{chi-cap}
and \eqref{ea-cap}) from the following expression for
the quantum mutual information:
\begin{equation}\label{mi-rep}
I(\rho, \Phi)=H(\rho)+\chi^{}_\Phi(\rho)-\chi^{}_{\widehat{\Phi}}(\rho)
=\chi^{}_\Phi(\rho)+\Delta_\Phi(\rho),
\end{equation}
where $\Delta_\Phi(\rho)=H(\rho)-\chi^{}_{\widehat{\Phi}}(\rho)$. This expression is easily derived by using ~\eqref{chi-fun+} and by noting
that
$\hat{H}_\Phi\equiv\hat{H}_{\widehat{\Phi}}$ (this follows from
coincidence of the functions $\rho\mapsto H(\Phi(\rho))$ and
$\rho\mapsto H(\widehat{\Phi}(\rho))$ on the set of pure states).\smallskip

Since $H(\rho)=\sum_i\pi_i H(\rho_i\|\rho)$ for any ensemble $\{\pi_i,\rho_i\}$ of pure states with the average state $\rho$, we
have
\begin{equation}\label{delta-rep}
\Delta_\Phi(\rho)=\min_{\substack{\sum_i\pi_i\rho_i=\rho\\
\rank\rho_i=1}}\,\sum_i\pi_i
\left[H(\rho_i\|\rho)-H(\widehat{\Phi}(\rho_i)\|\widehat{\Phi}(\rho))\right]\ge 0,
\end{equation}
where the last inequality follows from  monotonicity of the relative
entropy.\medskip

\begin{remark}\label{opt}
The minimum in~\eqref{delta-rep} is achieved at an ensemble $\{\pi_i,\rho_i\}$ of pure states if and only if the maximum in~\eqref{chi-fun} is achieved at this ensemble. Indeed, since $\sum_i\pi_i
H(\Phi(\rho_i))=\sum_i\pi_i H(\widehat{\Phi}(\rho_i))$,
this can be easily shown by using expression ~\eqref{chi-fun+} for
the $\chi$-functions of the channels $\Phi$ and $\widehat{\Phi}$.
\end{remark}

\subsection{General inequalities}

Expression~\eqref{mi-rep} immediately implies the general upper
bound
$$
C_{\mathrm{ea}}(\Phi)\le\bar{C}(\Phi)+\log\dim\mathcal{H}_A,
$$
proved in~\cite{Fan+,H-r-c-c} by different methods. By using this
expression and by noting that
$\chi^{}_\Phi(\rho)-\chi^{}_{\widehat{\Phi}}(\rho)=I_c(\rho,\Phi)$ is the coherent information of the channel $\Phi$ at the state
$\rho$ (see~\cite{Sch}) it easy to obtain the following
inequalities\footnote{Here and in what follows the subscription in the third inequality means that it holds under the condition
$H(\Phi(\rho))\ge H(\rho)$ for all
$\rho\in\mathfrak{S}(\mathcal{H}_A)$. This condition is valid, in particular, for all bistochastic channels.}:
\begin{equation}\label{d-in}
\!\!\begin{array}{l}
H(\rho_1)-\bar{C}(\widehat{\Phi})\le C_{\mathrm{ea}}(\Phi)-\bar{C}(\Phi) \\\\ \le
H(\rho_2)-\chi^{}_{\widehat{\Phi}}(\rho_2)\mathrel{\underset{H(\Phi(\cdot))\ge
H(\cdot)}{\le}}H(\Phi(\rho_2))-\chi^{}_{\widehat{\Phi}}(\rho_2) =
I_c(\rho_2,\Phi)+\hat{H}_\Phi(\rho_2),
\end{array}\!\!\!
\end{equation}
where $\rho_1$ and $\rho_2$ are states in $\mathfrak{S}(\mathcal{H}_A)$ such that
$\chi^{}_\Phi(\rho_1)=\bar{C}(\Phi)$ (i.e. $\rho_1$ is the average
state of an optimal ensemble) and
$I(\rho_2,\Phi)=C_{\mathrm{ea}}(\Phi)$.\medskip

Let $Q_1(\Phi)=\max_{\rho\in\mathfrak{S}(\mathcal{H}_A)}I_c(\rho, \Phi)$
and $Q(\Phi)=\lim_{n\to+\infty} n^{-1}Q_1(\Phi^{\otimes n})$ be the quantum capacity of the channel
$\Phi$~\cite{N&Ch}. The following proposition contains several  estimations derived from~\eqref{d-in}.
\medskip

\begin{property}\label{simple}
\emph{Let \/ $\Phi\colon\mathfrak{S}(\mathcal{H}_A)\to\mathfrak{S}(\mathcal{H}_B)$ be a quantum
channel and
\/ $\widehat{\Phi}\colon\mathfrak{S}(\mathcal{H}_A)\to\mathfrak{S}(\mathcal{H}_E)$
its
complementary channel.}
\medskip

A) \emph{The following inequalities hold
\begin{equation}
\!\bar{C}(\Phi)-\bar{C}(\widehat{\Phi})\le
C_{\mathrm{ea}}(\Phi)-\bar{C}(\Phi)\mathrel{\underset{H(\Phi(\cdot))\ge
H(\cdot)}{\le}} Q_1(\Phi)+\min\sum_i\pi_iH(\Phi(\rho_i)),\!\label{one}
\end{equation}
\begin{equation}
\!C(\Phi)-C(\widehat{\Phi})\le
C_{\mathrm{ea}}(\Phi)-C(\Phi)\mathrel{\underset{H(\Phi(\cdot))\ge H(\cdot)}{\le}}
Q(\Phi)+\min\sum_i\pi_iH(\Phi(\rho_i)),\!\label{two}
\end{equation}
where the minimum is over all ensembles $\{\pi_i,\rho_i\}$ of pure states such that  $I\Bigl(\sum_i\pi_i\rho_i,
\Phi\Bigr)=C_{\mathrm{ea}}(\Phi)$. This term  can be replaced by \/ $\max_{\rho\in\,\mathop{\rm
extr}\mathfrak{S}(\mathcal{H}_A)}H(\Phi(\rho))$.}
\medskip

B) \emph{If the average state of at least one optimal ensemble for
the channel~$\,\Phi$ coincides with the chaotic state \/ $\rho_c=(\dim\mathcal{H}_A)^{-1}I_A$ then
$$
C_{\mathrm{ea}}(\Phi)-\bar{C}(\Phi)\ge
\log\dim\mathcal{H}_A-\bar{C}(\widehat{\Phi})
$$
and hence $\bar{C}(\Phi)=C_{\mathrm{ea}}(\Phi)\;\Rightarrow\;\bar{C}(\widehat{\Phi})
=\log\dim\mathcal{H}_A$.\footnote{Note that
$\bar{C}(\widehat{\Phi})\le\log\dim\mathcal{H}_A$ for any channel $\Phi$.}}
\medskip

C) \emph{If $C_{\mathrm{ea}}(\Phi)=I(\rho_c,\Phi)$ then
$\,\bar{C}(\widehat{\Phi})=\log\dim\mathcal{H}_A\;\Rightarrow\;\bar{C}(\Phi)
=C_{\mathrm{ea}}(\Phi)$. If, in addition, the average state of at least one optimal ensemble
for the channel~$\,\widehat{\Phi}$ coincides with the chaotic state $\rho_c$ then}
$$
C_{\mathrm{ea}}(\Phi)-\bar{C}(\Phi)\le
\log\dim\mathcal{H}_A-\bar{C}(\widehat{\Phi}).
$$
\end{property}

\begin{proof}
A) Inequality~\eqref{one}  directly follows from~\eqref{d-in}. To obtain
inequality \eqref{two} by regularization from~\eqref{d-in} it is sufficient  to note that the function\break
$\mathfrak{S}(\mathcal{H}^{\otimes n}_A)\ni\omega\mapsto I(\omega,\Phi^{\otimes n})$
attains maximum at
the state $\rho^{\otimes n}_2$ by subadditivity of the quantum mutual information and to use the obvious inequality $\hat{H}_{\Phi^{\otimes n}}(\rho_2^{\otimes n})\le
n\hat{H}_\Phi(\rho_2)$.\medskip

B) This assertion directly follows from inequality~\eqref{d-in}.\medskip

C) To derive the first part of this assertion from inequality~\eqref{d-in} note that $\bar{C}(\widehat{\Phi})=\log\dim\mathcal{H}_A$
implies $\bar{C}(\widehat{\Phi})=\chi^{}_{\widehat{\Phi}}(\rho_c)$. The
second part directly follows from the second inequality in~\eqref{d-in}.
\end{proof}

\begin{remark}\label{simple-r}
Since $\bar{C}(\widehat{\Phi})\le\log\dim\mathcal{H}_E$, we have
$$
C_{\mathrm{ea}}(\Phi)-\bar{C}(\Phi)\ge \log\dim\mathcal{H}_A-\log\dim\mathcal{H}_E
$$
for any channel $\Phi$ satisfying the condition of Proposition~\ref{simple},\,B) and hence $C_{\mathrm{ea}}(\Phi)>\bar{C}(\Phi)$ if the dimension of the
environment (=the minimal number of Kraus operators) is less than
the dimension of the input space of the channel $\Phi$.\medskip

For an arbitrary channel $\Phi$ inequality~\eqref{d-in} implies
$$
C_{\mathrm{ea}}(\Phi)-\bar{C}(\Phi)\ge
H(\bar{\rho})-\log\dim\mathcal{H}_E\ge\bar{C}(\Phi)-\log\dim\mathcal{H}_E,
$$
where $\bar{\rho}$ is the average state of any optimal ensemble for the channel $\Phi$.
\end{remark}

\subsection{Conditions for the equality $\bar{C}(\Phi)=C_{\mathrm{ea}}(\Phi)$ based on the Petz theorem}

By using expressions~\eqref{mi-rep} and~\eqref{delta-rep}, monotonicity of the relative entropy and the Petz theorem \cite[Theorem~3]{Petz}
characterizing the case in which
monotonicity of the relative entropy holds with an equality, the
following necessary and sufficient conditions for the equality $\bar{C}(\Phi)=C_{\mathrm{ea}}(\Phi)$ can be obtained.\medskip

\begin{theorem}\label{coin-crit}
\emph{Let $\,\Phi\colon\mathfrak{S}(\mathcal{H}_A)\to\mathfrak{S}(\mathcal{H}_B)$ be a quantum
channel and\break
$\,\widehat{\Phi}\colon\mathfrak{S}(\mathcal{H}_A) \to\mathfrak{S}(\mathcal{H}_E)$
its complementary channel.}\medskip

A) \emph{If there exist a
channel \/ $\Theta\colon\mathfrak{S}(\mathcal{H}_E)\to\mathfrak{S}(\mathcal{H}_A)$
and an ensemble
\/ $\{\pi_i,\rho_i\}$ of pure states such that
\begin{equation}\label{inv-cond}
\Theta(\widehat{\Phi}(\rho_i))=\rho_i,\quad \forall i,
\end{equation}
and $I(\bar{\rho}, \Phi)=C_{\mathrm{ea}}(\Phi)$, where \/
$\bar{\rho}=\sum_i\pi_i\rho_i$, then
$\bar{C}(\Phi)=C_{\mathrm{ea}}(\Phi)$.\footnote{It is sufficient to require that $\Theta$ is a trace preserving positive map for which monotonicity of the relative entropy holds.}}\medskip

B) \emph{If $\bar{C}(\Phi)=C_{\mathrm{ea}}(\Phi)$ then for an
arbitrary optimal ensemble \/ $\{\pi_i,\rho_i\}$ of pure states for the
channel \/ $\Phi$  with the average state
$\bar{\rho}$ there exists a channel \/
$\Theta\colon\mathfrak{S}(\mathcal{H}_E)\to\mathfrak{S}(\mathcal{H}_A)$ such that
\/~\eqref{inv-cond} holds. The channel \/ $\Theta$ can be defined by means of an arbitrary non-degenerate probability distribution \/ $\{\hat{\pi}_i\}$ by setting its action on any state \/ $\sigma$ supported by the subspace
$\,\supp\widehat{\Phi}(\bar{\rho})$ as follows
\begin{equation}\label{th-f}
\Theta(\sigma)=[\hat{\rho}]^{1/2}\widehat{\Phi}^*
\left(\bigl[\widehat{\Phi}(\hat{\rho})\bigr]^{-1/2}\sigma
\bigl[\widehat{\Phi}(\hat{\rho})\bigr]^{-1/2}\right)[\hat{\rho}]^{1/2},
\end{equation}
where  $\hat{\rho}=\sum_i\hat{\pi}_i\rho_i$ and \/ $\widehat{\Phi}^*$ is a dual map to the channel \/
$\widehat{\Phi}$.}

\emph{If \/ $\{\hat{\pi}_i\}$ is a degenerate probability distribution then relation ~\eqref{inv-cond} holds for the channel  \/ $\Theta$ defined by~\eqref{th-f} for all $i$ such that \/ $\hat{\pi}_i>0$.}
\end{theorem}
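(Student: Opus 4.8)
To prove part A, the plan is to show that the stated hypotheses force $\Delta_\Phi(\bar\rho)=0$ and then to read off the conclusion from representation \eqref{mi-rep}. First I would observe that, since $\Theta$ is trace preserving and $\Theta(\widehat\Phi(\rho_i))=\rho_i$ for all $i$, linearity gives $\Theta(\widehat\Phi(\bar\rho))=\bar\rho$; hence applying monotonicity of the relative entropy to the map $\Theta\circ\widehat\Phi$ yields
$$
H(\rho_i\|\bar\rho)=H\bigl(\Theta(\widehat\Phi(\rho_i))\,\|\,\Theta(\widehat\Phi(\bar\rho))\bigr)\le H(\widehat\Phi(\rho_i)\|\widehat\Phi(\bar\rho))\le H(\rho_i\|\bar\rho),
$$
that is, $H(\widehat\Phi(\rho_i)\|\widehat\Phi(\bar\rho))=H(\rho_i\|\bar\rho)$ for every $i$. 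Feeding the ensemble $\{\pi_i,\rho_i\}$ (which has average state $\bar\rho$) into the minimum \eqref{delta-rep} then gives $\Delta_\Phi(\bar\rho)\le\sum_i\pi_i[H(\rho_i\|\bar\rho)-H(\widehat\Phi(\rho_i)\|\widehat\Phi(\bar\rho))]=0$, whence $\Delta_\Phi(\bar\rho)=0$ by the nonnegativity in \eqref{delta-rep}. Combining this with \eqref{mi-rep} and $I(\bar\rho,\Phi)=C_{\mathrm{ea}}(\Phi)$ gives $C_{\mathrm{ea}}(\Phi)=\chi^{}_\Phi(\bar\rho)\le\bar C(\Phi)\le C_{\mathrm{ea}}(\Phi)$, so $\bar C(\Phi)=C_{\mathrm{ea}}(\Phi)$. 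Note that only monotonicity of the relative entropy under $\Theta$ is used, which justifies the footnote.

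For part B, I would first reduce to an equality of relative entropies and then invoke the Petz theorem. Assuming $\bar C(\Phi)=C_{\mathrm{ea}}(\Phi)$ and fixing an optimal ensemble $\{\pi_i,\rho_i\}$ for $\Phi$ with average $\bar\rho$ (all $\pi_i>0$), the inequality $C_{\mathrm{ea}}(\Phi)\ge I(\bar\rho,\Phi)=\chi^{}_\Phi(\bar\rho)+\Delta_\Phi(\bar\rho)=\bar C(\Phi)+\Delta_\Phi(\bar\rho)$ together with $\Delta_\Phi\ge0$ forces $\Delta_\Phi(\bar\rho)=0$. Since the ensemble attains the maximum in \eqref{chi-fun}, Remark~\ref{opt} tells me it attains the minimum in \eqref{delta-rep}; as each summand there is nonnegative by monotonicity and their $\pi_i$-weighted sum is $0$, I obtain $H(\widehat\Phi(\rho_i)\|\widehat\Phi(\bar\rho))=H(\rho_i\|\bar\rho)$ for all $i$. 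Now the Petz theorem \cite[Theorem~3]{Petz}, applied to the channel $\widehat\Phi$ with reference state $\bar\rho$, says precisely that this equality is equivalent to $\mathcal{R}_{\bar\rho}(\widehat\Phi(\rho_i))=\rho_i$, where $\mathcal{R}_{\bar\rho}$ is the Petz recovery channel for $\widehat\Phi$ with reference $\bar\rho$, i.e.\ the right-hand side of \eqref{th-f} with $\hat\rho$ replaced by $\bar\rho$. This already exhibits a channel $\Theta$ satisfying \eqref{inv-cond}.

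It remains to upgrade the reference state from $\bar\rho$ to an arbitrary $\hat\rho=\sum_i\hat\pi_i\rho_i$ with a non-degenerate distribution $\{\hat\pi_i\}$, and this is the step I expect to demand the most care. The idea is that the single channel $\Theta_0:=\mathcal{R}_{\bar\rho}$ recovers \emph{every} $\rho_i$, so by linearity $\Theta_0(\widehat\Phi(\hat\rho))=\sum_i\hat\pi_i\Theta_0(\widehat\Phi(\rho_i))=\hat\rho$; hence for each $i$ with $\hat\pi_i>0$ (so that $\supp\rho_i\subseteq\supp\hat\rho$ and $H(\rho_i\|\hat\rho)<\infty$) the sandwich
$$
H(\rho_i\|\hat\rho)\ge H(\widehat\Phi(\rho_i)\|\widehat\Phi(\hat\rho))\ge H\bigl(\Theta_0(\widehat\Phi(\rho_i))\,\|\,\Theta_0(\widehat\Phi(\hat\rho))\bigr)=H(\rho_i\|\hat\rho)
$$
forces $H(\widehat\Phi(\rho_i)\|\widehat\Phi(\hat\rho))=H(\rho_i\|\hat\rho)$. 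Applying the Petz theorem once more, now with reference state $\hat\rho$, converts this into $\mathcal{R}_{\hat\rho}(\widehat\Phi(\rho_i))=\rho_i$, which is exactly \eqref{inv-cond} for $\Theta$ given by \eqref{th-f} (with generalized inverses understood where $\widehat\Phi(\hat\rho)$ is not faithful). For a degenerate $\{\hat\pi_i\}$ the same argument goes through verbatim for every index $i$ with $\hat\pi_i>0$ and makes no claim when $\hat\pi_i=0$. The main obstacle, then, is recognizing that the Petz map built from the average of the optimal ensemble is automatically a \emph{universal} recovery channel for the whole ensemble, which is what allows the equality case of monotonicity to be re-run for any convex combination of the code states; the remaining work is routine, modulo the bookkeeping of supports and generalized inverses for non-faithful states.
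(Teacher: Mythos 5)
Your proposal is correct and follows essentially the same route as the paper: part A via monotonicity of relative entropy, the sandwich giving equality, \eqref{delta-rep} and \eqref{mi-rep}; part B via $\Delta_\Phi(\bar\rho)=0$, Remark~\ref{opt}, the Petz theorem with reference state $\bar\rho$, and then re-running the monotonicity/Petz argument with the recovered channel to pass to an arbitrary $\hat\rho=\sum_i\hat\pi_i\rho_i$. Your spelled-out justification of the step for general $\{\hat\pi_i\}$ (linearity of the universal recovery map plus the monotonicity sandwich) is exactly what the paper's terse phrase ``by monotonicity of the relative entropy'' is invoking.
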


\begin{proof}
A) If $\{\pi_i,\rho_i\}$ is an ensemble of pure states with the
average state $\bar{\rho}$ for which~\eqref{inv-cond} holds then
monotonicity of the relative entropy and~\eqref{delta-rep} imply
$\Delta_\Phi(\bar{\rho})=0$ and hence $C_{\mathrm{ea}}(\Phi)=I(\bar{\rho},
\Phi)=\chi^{}_\Phi(\bar{\rho})\le\bar{C}(\Phi)$.\medskip

B) Since $\chi^{}_\Phi(\rho)\le I(\rho, \Phi)$ for any state $\rho$ by \eqref{mi-rep}, it is easy to see
that
$\bar{C}(\Phi)=C_{\mathrm{ea}}(\Phi)$ implies $\chi^{}_\Phi(\bar{\rho})=I(\bar{\rho}, \Phi)$
for any an optimal
ensemble $\{\pi_i,\rho_i\}$ of pure states with the average state $\bar{\rho}$. It follows from \eqref{delta-rep} and Remark~\ref{opt} that
$$
H(\rho_i\|\bar{\rho})=H(\widehat{\Phi}(\rho_i)\|\widehat{\Phi}(\bar{\rho})),\quad
\forall i.
$$
Hence the Petz theorem~\cite[Theorem 3]{Petz} implies existence of the
channel $\Theta$ for which~\eqref{inv-cond} holds. By monotonicity of the relative entropy for arbitrary probability distribution  $\{\hat{\pi}_i\}$ we have
$$
H(\rho_i\|\hat{\rho})=H(\widehat{\Phi}(\rho_i)\|\widehat{\Phi}(\hat{\rho})),\quad
\hat{\rho}=\sum_i\hat{\pi}_i\rho_i,
$$
for all $i$ such that $\hat{\pi}_i>0$. Hence the formula for
the channel $\Theta$ also follows from the Petz theorem.
\end{proof}

Theorem~\ref{coin-crit},\,A) makes it possible to prove the equality $C_{\mathrm{ea}}(\Phi)=\bar{C}(\Phi)$ for all classical-quantum channels (see Theorem~\ref{e-b-ch} in Section~2.3).\medskip

Theorem~\ref{coin-crit},\,B) can be used to prove the strict inequality $C_{\mathrm{ea}}(\Phi)>\bar{C}(\Phi)$,
by showing that~\eqref{inv-cond} can not be valid for an optimal ensemble
$\{\pi_i,\rho_i\}$ and the channel $\Theta$ defined by~\eqref{th-f}.\medskip

\begin{example}\label{measurement}
Consider the entanglement-breaking  channel
$$
\Phi(\rho)=\sum_k\langle\varphi_k|\rho|\varphi_k\rangle|k\rangle\langle k|,
$$
where $\{|\varphi_k\rangle\}$ is an overcomplete system of vectors
in the space $\mathcal{H}_A$ (that is
$\sum_k|\varphi_k\rangle\langle\varphi_k|=I_A$) and $\{|k\rangle\}$ is an orthonormal basis in the space
$\mathcal{H}_B$. It is
easy to see that $\Phi=\widehat{\Phi}$. Hence $I(\rho,\Phi)=H(\rho)$ and
$C_{\mathrm{ea}}(\Phi)=\log\dim\mathcal{H}_A$.

Suppose that $\bar{C}(\Phi)=C_{\mathrm{ea}}(\Phi)=\log\dim\mathcal{H}_A$. Then the
average state of any optimal ensemble $\{\pi_i,\rho_i\}$ for the
channel $\Phi$ coincides with the chaotic state $\rho_c$ in $\mathfrak{S}(\mathcal{H}_A)$. Since
$\widehat{\Phi}^*(A)=\sum_k\langle k|A|k\rangle|\varphi_k\rangle\langle
\varphi_k|$ and $\widehat{\Phi}(\rho_c)=\Phi(\rho_c)$ is a full rank state, relation~\eqref{inv-cond}
can be valid for the channel $\Theta$ defined by
\eqref{th-f} only if $\rho_i=|\varphi_{k_i}\rangle\langle \varphi_{k_i}|$ for some $k_i$ and
$$\rank\widehat{\Phi}(|\varphi_{k_i}\rangle\langle
\varphi_{k_i}|)=\rank\sum_k\langle\varphi_k|\varphi_{k_i}\rangle\langle
\varphi_{k_i}|\varphi_k\rangle|k\rangle\langle k|=1$$ for all $i$. But
this can be valid only if $\{|\varphi_k\rangle\}$ is an orthonormal
basis. So,  we conclude that
$$
C_{\mathrm{ea}}(\Phi)=\bar{C}(\Phi)\quad\Leftrightarrow\quad\{|\varphi_k\rangle\}\
\text{is an ortonormal basis}.
$$
\end{example}

The same conclusion was obtained in~\cite{H-CQM} as a corollary of a
general criterion for the equality $C_{\mathrm{ea}}(\Phi)=\bar{C}(\Phi)$ for the
class of channels defined by quantum observables, which is proved by
means of the ensemble-measurement duality.

\subsection{A simple criterion for the equality $\bar{C}(\Phi)=C_{\mathrm{ea}}(\Phi)$.}

Now we will show that the equality $\bar{C}(\Phi)=C_{\mathrm{ea}}(\Phi)$ holds if (correspondingly, only if) the channel $\Phi$ (correspondingly, the subchannel of $\Phi$ determining its classical capacity) belongs to the class of classical-quantum channels.\smallskip

A channel $\Phi\colon\mathfrak{S}(\mathcal{H}_A)\to\mathfrak{S}(\mathcal{H}_B)$ is called \emph{classical-quantum} if it has the following representation
\begin{equation}\label{c-q-rep}
\Phi(\rho)=\sum_{k=1}^{\dim\mathcal{H}_A}\langle
k|\rho|k\rangle\sigma_k,\quad\rho\in\mathfrak{S}(\mathcal{H}_A),
\end{equation}
where $\{|k\rangle\}$ is an orthonormal basis in $\mathcal{H}_A$ and $\{\sigma_k\}$ is a collection of states in  $\mathfrak{S}(\mathcal{H}_B)$
\cite{e-b-ch,N&Ch}.\medskip

For correct formulation of the above statement we will need the following notion.\smallskip

\begin{definition}\label{ef-subch}
Let $\mathcal{H}^\chi_\Phi$ be the minimal subspace of $\mathcal{H}_A$ containing elements of
all optimal ensembles for the channel
$\Phi\colon\mathfrak{S}(\mathcal{H}_A)\to\mathfrak{S}(\mathcal{H}_B)$. The restriction $\Phi_{\chi}$
of the channel $\Phi$ to the set
$\mathfrak{S}(\mathcal{H}^\chi_\Phi)$ is called \emph{$\chi$-essential\/} part (subchannel) of the channel~$\Phi$.
\end{definition}\medskip

If $\mathcal{H}^\chi_\Phi\ne\mathcal{H}_A$ then pure states corresponding to vectors in
$\mathcal{H}_A\setminus\mathcal{H}^\chi_\Phi$ can not be used as elements of optimal
ensemble for the channel $\Phi$. This means, roughly speaking,
that these states are useless for non-entangled coding of classical information and
hence it is natural to consider the $\chi$-essential subchannel
$\Phi_{\chi}$ instead of the channel $\,\Phi$ dealing with the
Holevo capacity of the channel $\Phi$ (which coincides with the classical capacity if
$C_{\mathrm{ea}}(\Phi)=\bar{C}(\Phi)$).

By definition $\bar{C}(\Phi_{\chi})=\bar{C}(\Phi)$. Hence $C_{\mathrm{ea}}(\Phi)=\bar{C}(\Phi)$ implies
$C_{\mathrm{ea}}(\Phi_{\chi})=C_{\mathrm{ea}}(\Phi)$. Thus, in this case speaking about the entanglement-assisted capacity of the
channel $\Phi$ we  may also consider the $\chi$-essential subchannel
$\Phi_{\chi}$ instead of the channel $\Phi$.\medskip

Theorem~\ref{coin-crit} makes it possible to prove the following assertions.\medskip

\begin{theorem}\label{e-b-ch}
\emph{Let $\,\Phi\colon\mathfrak{S}(\mathcal{H}_A)\to\mathfrak{S}(\mathcal{H}_B)$ be a quantum channel.}
\medskip

 A) \emph{If $\,\Phi$ is a classical-quantum channel then $\,C_{\mathrm{ea}}(\Phi)=\bar{C}(\Phi)$.}\medskip

B) \emph{If  $C_{\mathrm{ea}}(\Phi)=\bar{C}(\Phi)$ then the $\chi$-essential part
of the channel $\Phi$ is a classical-quantum channel.}
\end{theorem}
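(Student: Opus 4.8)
\medskip\noindent\emph{Proof sketch.}

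\emph{Part A.} The plan is to apply Theorem~\ref{coin-crit},\,A) with the ensemble formed by the basis vectors in the representation~\eqref{c-q-rep}. Write $\Phi(\rho)=\sum_k\langle k|\rho|k\rangle\sigma_k$. First I would show that $C_{\mathrm{ea}}(\Phi)$ is attained at a state diagonal in the basis $\{|k\rangle\}$: for every diagonal unitary $U=\sum_k e^{i\theta_k}|k\rangle\langle k|$ one has $\Phi\circ\mathrm{Ad}_U=\Phi$, hence $\widehat{\Phi}\circ\mathrm{Ad}_U$ is unitarily equivalent to $\widehat{\Phi}$, so each of the three entropic terms in $I(\rho,\Phi)$ is invariant under $\rho\mapsto U\rho U^*$; averaging over $U$ and using concavity of $\rho\mapsto I(\rho,\Phi)$ then gives $I(\mathcal{D}(\rho),\Phi)\ge I(\rho,\Phi)$, where $\mathcal{D}$ is the dephasing in $\{|k\rangle\}$, so some diagonal $\bar{\rho}=\sum_k\pi_k|k\rangle\langle k|$ is optimal for $C_{\mathrm{ea}}$. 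Next, using the Stinespring dilation $V|k\rangle=\sum_j\sqrt{p_{kj}}\,|\psi_{kj}\rangle_B\otimes|k,j\rangle_E$, where $\sigma_k=\sum_j p_{kj}|\psi_{kj}\rangle\langle\psi_{kj}|$, the states $\widehat{\Phi}(|k\rangle\langle k|)$ are supported in the mutually orthogonal subspaces $\mathcal{H}_E^{(k)}=\mathrm{span}_j\{|k,j\rangle\}$; hence $\Theta(\sigma)=\sum_k\bigl(\Tr P_k\sigma\bigr)|k\rangle\langle k|$, with $P_k$ the projection onto $\mathcal{H}_E^{(k)}$, is a channel with $\Theta(\widehat{\Phi}(|k\rangle\langle k|))=|k\rangle\langle k|$ for all $k$. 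Applying Theorem~\ref{coin-crit},\,A) to the ensemble $\{\pi_k,|k\rangle\langle k|\}$ (whose average $\bar{\rho}$ satisfies $I(\bar{\rho},\Phi)=C_{\mathrm{ea}}(\Phi)$) yields $\bar{C}(\Phi)=C_{\mathrm{ea}}(\Phi)$.

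\emph{Part B.} The plan is to combine the Petz recovery channel of Theorem~\ref{coin-crit},\,B) with the structure theory of channels possessing a faithful stationary state. Applying the identity $\sum_i\pi_i H(\omega_i\|\sigma)=\sum_i\pi_i H(\omega_i\|\bar\omega)+H(\bar\omega\|\sigma)$ (with $\bar\omega=\sum_i\pi_i\omega_i$) to the output states $\omega_i=\Phi(\rho_i)$ shows that the union of two optimal ensembles is again optimal and that all optimal ensembles share one and the same average output state; therefore one can choose a single optimal pure ensemble $\{\pi_i,\rho_i=|\psi_i\rangle\langle\psi_i|\}$ whose elements span $\mathcal{H}^\chi_\Phi$ and whose average $\bar{\rho}$ is faithful on $\mathcal{H}^\chi_\Phi$. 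By Theorem~\ref{coin-crit},\,B) there is a channel $\Theta$ — the Petz map~\eqref{th-f} built from $\hat{\rho}=\bar{\rho}$, whose range lies in $\mathfrak{S}(\mathcal{H}^\chi_\Phi)$ — with $\Theta(\widehat{\Phi}(\rho_i))=\rho_i$ for all $i$, hence $\Theta(\widehat{\Phi}(\bar{\rho}))=\bar{\rho}$ by linearity. Thus $F:=\Theta\circ\widehat{\Phi}$, regarded as a channel on $\mathcal{H}^\chi_\Phi$, fixes the faithful state $\bar{\rho}$ and the spanning family of pure states $\{\rho_i\}$.

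The core step is structural: a channel with a faithful stationary state leaves invariant a decomposition $\mathcal{H}^\chi_\Phi=\bigoplus_j\mathcal{K}^L_j\otimes\mathcal{K}^R_j$ on whose left factors it acts as the identity; since $F$ in addition fixes the \emph{pure} states $\rho_i$, each $\rho_i$ must lie in a single summand with a one-dimensional right factor, and faithfulness of $\bar{\rho}$ then forces all right factors to be one-dimensional. Hence $\mathcal{H}^\chi_\Phi=\bigoplus_j\mathcal{K}_j$ (an orthogonal sum), every $|\psi_i\rangle$ lies in some $\mathcal{K}_j$, $F$ restricts to the identity channel on each $\mathcal{K}_j$, so $\widehat{\Phi}$ is reversible on $\mathfrak{S}(\mathcal{K}_j)$; this forces $\widehat{\Phi}|_{\mathcal{K}_j}$ to have the form $\rho\mapsto U_j(\rho\otimes\tau_j)U_j^*$ for an isometry $U_j$ and a fixed state $\tau_j$, and therefore (its complementary channel retaining all the input information) $\Phi|_{\mathcal{K}_j}$ to be a constant map. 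I expect this extraction of the orthogonal block structure — and of the triviality of $F$ on the blocks — from the recovery condition to be the main obstacle.

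It remains to glue the blocks. Monotonicity of the trace norm under $\Theta$, together with $\Theta(\widehat{\Phi}(\rho_i))=\rho_i$, yields $\|\widehat{\Phi}(\rho_i)-\widehat{\Phi}(\rho_{i'})\|_1=\|\rho_i-\rho_{i'}\|_1$, so $\widehat{\Phi}(\rho_i)$ and $\widehat{\Phi}(\rho_{i'})$ have orthogonal supports whenever $\rho_i,\rho_{i'}$ lie in different blocks; consequently the subspaces $\mathcal{H}^{(j)}_E:=\supp\widehat{\Phi}(\bar{\rho}_j)$, with $\bar{\rho}_j$ the block-$j$ part of $\bar{\rho}$, are mutually orthogonal and $\widehat{\Phi}$ maps states supported in $\mathcal{K}_j$ into $\mathfrak{S}(\mathcal{H}^{(j)}_E)$. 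Hence a Stinespring isometry $W$ of $\Phi_\chi$ satisfies $W(\mathcal{K}_j)\subseteq\mathcal{H}_B\otimes\mathcal{H}^{(j)}_E$, which gives $\Phi(|\psi\rangle\langle\phi|)=\Tr_E(W|\psi\rangle\langle\phi|W^*)=0$ for $|\psi\rangle,|\phi\rangle$ in different blocks, while constancy of $\Phi|_{\mathcal{K}_j}$ together with a polarization identity gives $\Phi(|f\rangle\langle f'|)=0$ for orthonormal $|f\rangle,|f'\rangle$ in the same block. Taking $\{|k\rangle\}$ to be the union of orthonormal bases of the blocks $\mathcal{K}_j$, one obtains $\Phi_\chi(\rho)=\sum_k\langle k|\rho|k\rangle\,\Phi(|k\rangle\langle k|)$, i.e. $\Phi_\chi$ is classical-quantum.
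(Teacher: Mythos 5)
Your Part~A is essentially the paper's argument: the paper also reduces to a maximizer of $I(\cdot,\Phi)$ that is diagonal in the basis $\{|k\rangle\}$ (via the chain rule $I(\rho,\Phi\circ\Pi)\le I(\Pi(\rho),\Phi)$ rather than your phase-twirl, which is equally valid) and then applies Theorem~\ref{coin-crit},\,A) to the ensemble of basis states; the only difference is that you build the recovery map $\Theta$ explicitly from a Stinespring dilation, while the paper cites the existence of $\Theta$ with $\Theta\circ\widehat{\Phi\circ\Pi}=\Pi$ from the degradability literature. Both are correct.

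Part~B is correct in outline but takes a genuinely different route. The paper stays elementary and self-contained: from an optimal ensemble with full-rank average it extracts $m=\dim\mathcal{H}_A$ elements forming a basis, applies the Petz map~\eqref{th-f} with uniform weights, observes that the transformed vectors $|\phi_k\rangle$ give a rank-one resolution of the identity with exactly $m$ terms (hence an orthonormal basis), and then a direct Kraus-operator computation ($\widehat{\Phi}^*(B_k)=|\phi_k\rangle\langle\phi_k|$ together with Lemma~\ref{new-kraus-rep}) forces every Kraus operator to be of the form $|\beta_{kp}\rangle\langle\phi_k|$, which is representation~\eqref{c-q-rep}. You instead analyze $F=\Theta\circ\widehat{\Phi}$ and import two nontrivial external facts: (i) for a channel with a faithful invariant state, $\mathrm{Fix}(F^*)$ equals the commutant of the Kraus operators, hence has the form $\bigoplus_j\mathfrak{B}(\mathcal{K}^L_j)\otimes I_{\mathcal{K}^R_j}$; (ii) a channel perfectly reversible on a subspace has constant complement there (equivalently the $U_j(\cdot\otimes\tau_j)U_j^*$ form). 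The step you flag as the main obstacle is exactly these imports, and they do close the argument, but note one detail you should spell out: passing from ``$F$ fixes the pure state $\rho_i$'' to ``$|\psi_i\rangle\langle\psi_i|\in\mathrm{Fix}(F^*)$'' needs a short extra argument (e.g. $F^*(|\psi_i\rangle\langle\psi_i|)\le I$ has $|\psi_i\rangle$ as a unit eigenvector, and faithfulness of the invariant state $\bar{\rho}$ kills the remainder), after which rank-one membership in the algebra forces one-dimensional right factors and, by the spanning property of the $\rho_i$, the decomposition $\mathcal{H}^\chi_\Phi=\bigoplus_j\mathcal{K}_j$ with $F=\mathrm{id}$ on each block. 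Your preliminary reduction (unions of optimal ensembles are optimal, so one may take a single optimal ensemble spanning $\mathcal{H}^\chi_\Phi$ with faithful average) and your gluing step (trace-norm monotonicity gives orthogonality of the environment outputs across blocks, so the Stinespring isometry maps distinct blocks into orthogonal $E$-sectors and all cross-block coherences are annihilated, while polarization kills intra-block coherences) are both sound. What your route buys is a stronger structural conclusion — $\Phi_\chi$ is constant on each block $\mathcal{K}_j$ — at the price of heavier machinery; the paper's computation is shorter and needs nothing beyond the Petz theorem already used in Theorem~\ref{coin-crit}.
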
\medskip

Example~\ref{n-e-b-ch} below shows that in general the $\chi$-essential part of the channel $\Phi$ in Theorem~\ref{e-b-ch},\,B) can not replaced by the channel $\Phi$.

\begin{proof}
A) If the channel $\Phi$ has representation~\eqref{c-q-rep} then $\Phi=\Phi\circ\Pi$, where $\Pi(\rho)=\sum_k\langle
k|\rho|k\rangle|k\rangle\langle k|$ is a channel from $\mathfrak{S}(\mathcal{H}_A)$ to itself.

It is easy to show (see~\cite[the proof of Lemma~17]{R&Co}) existence of a channel  $\Theta$ such that
$\Theta\circ\widehat{\Phi\circ\Pi}=\widehat{\Pi}=\Pi$.

By the chain rule for the quantum mutual information (see~\cite{N&Ch}) we have
$$
I(\rho,\Phi)=I(\rho,\Phi\circ\Pi)\le I(\Pi(\rho),\Phi).
$$
It follows that the function $\rho\mapsto I(\rho,\Phi)$ attains
maximum at a state diagonizable in the basis $\{|k\rangle\}$. Since
$\Theta\circ\widehat{\Phi\circ\Pi}(|k\rangle\langle k|)=\Pi(|k\rangle\langle k|)
=|k\rangle\langle k|$ for any $k$, Theorem~\ref{coin-crit},\,A)
implies $C_{\mathrm{ea}}(\Phi)=\bar{C}(\Phi)$.\medskip

B) Replacing the channel $\Phi$ by its $\chi$-essential subchannel, we may consider that $\mathcal{H}^\chi_\Phi=\mathcal{H}_A$.

Let $\Phi(\rho)=\sum_{i=1}^nV_i\rho V_i^*$ be a minimal Kraus
representation of the channel $\Phi$. Then
$$
\widehat{\Phi}(\rho)=\sum_{i,j=1}^n \Tr V_i\rho V_j^*|i\rangle\langle
j|\qquad\textrm{and}\qquad\widehat{\Phi}^*(A)=\sum_{i,j=1}^n \langle j |A|i\rangle
V_j^*V_i,
$$
where $\left\{|i\rangle\right\}_{i=1}^n$ is an orthonormal basis in the
$n$-dimensional Hilbert space $\mathcal{H}_E$.

Let $\{\pi_k,|\varphi_k\rangle\langle \varphi_k|\}$ be an optimal ensemble of pure states
for the channel $\Phi$ with a full rank average state. We may assume that
$\left\{|\varphi_k\rangle\right\}_{k=1}^m$, $m=\dim\mathcal{H}_A$, is a basis in the space
$\mathcal{H}_A$. Let $\hat{\pi}_k=1/m$, $k=\overline{1,m}$. Then
$\hat{\rho}=\sum_{k=1}^m\hat{\pi}_k|\varphi_k\rangle\langle \varphi_k|$ is a full rank state in
$\mathfrak{S}(\mathcal{H}_A)$. Since $\mathcal{H}_E$ is an environment space of minimal
dimension, $\widehat{\Phi}(\hat{\rho})$ is a full rank state in
$\mathfrak{S}(\mathcal{H}_E)$.

Let $|\phi_k\rangle=\!\sqrt{\hat{\pi}_k\hat{\rho}^{-1}}|\varphi_k\rangle$ and
$B_k=\hat{\pi}_k\bigl[\widehat{\Phi}(\hat{\rho})\bigr]^{-1/2}
\widehat{\Phi}(|\varphi_k\rangle\langle
\varphi_k|)\bigl[\widehat{\Phi}(\hat{\rho})\bigr]^{-1/2}$, $k=\overline{1,m}$. Since
$\sum_{k=1}^m|\phi_k\rangle\langle \phi_k|=I_{\mathcal{H}_A}$,
$\left\{|\phi_k\rangle\right\}_{k=1}^m$ is an orthonormal basis in $\mathcal{H}_A$.
By Theorem~\ref{coin-crit},\,B) $|\phi_k\rangle\langle \phi_k|=\widehat{\Phi}^*(B_k)$ for all
$k$. By
the spectral theorem  $B_k=\sum_p|\psi^p_k\rangle\langle \psi^p_k|$, where
$\left\{|\psi^p_k\rangle\right\}_p$ is a set of vectors in $\mathcal{H}_E$, for
each $k$. Since
$\widehat{\Phi}(\hat{\rho})$~is a full rank state,
we have
$$
\sum_{k,p}|\psi^p_k\rangle\langle \psi^p_k|=\sum_k B_k=I_E.
$$
By Lemma~\ref{new-kraus-rep} below  $\Phi(\rho)=\sum_{k,p}W_{kp}\rho W_{kp}^*$, where
$W_{kp}=\sum^n_{i=1}\langle\psi^p_k|i\rangle V_i$.

Since $|\phi_k\rangle\langle \phi_k|
=\widehat{\Phi}^*\Bigl(\sum_p|\psi^p_k\rangle\langle \psi^p_k|\Bigr)$ for each $k$ and
$$
\widehat{\Phi}^*(|\psi^p_k\rangle\langle \psi^p_k|)=\sum_{i,j=1}^n \langle j
|\psi^p_k\rangle\langle\psi^p_k|i\rangle V_j^*V_i=W_{kp}^*W_{kp},
$$
there exists a collection $\{|\beta_{kp}\rangle\}$ of vectors in $\mathcal{H}_B$ such that $W_{kp}=|\beta_{kp}\rangle\langle
\phi_k|$ and $\sum_p\|\beta_{kp}\|^2=1$ for each $k$. Hence
$$
\Phi(\rho)=\sum_{k,p}W_{kp}\rho
W_{kp}^*=\sum_k\langle\phi_k|\rho|\phi_k\rangle\sum_{p}|\beta_{kp}\rangle\langle
\beta_{kp}|.
$$
\end{proof}

\begin{lemma}\label{new-kraus-rep}
\emph{Let \/ $\Phi(\rho)=\sum_{i=1}^nV_i\rho V_i^*$ be a quantum channel and  \/
$\left\{|i\rangle\right\}_{i=1}^n$ be an orthonormal basis in the $n$-dimensional Hilbert space \/ $\mathcal{H}_E$.
An arbitrary overcomplete
system \/
$\left\{|\psi_k\rangle\right\}_k$ of vectors in \/ $\mathcal{H}_E$ generates the
Kraus representation \/
$\Phi(\rho)=\sum_kW_k\rho W_k^*$ of the
channel \/ $\Phi$, where $W_k=\sum^n_{i=1}\langle\psi_k|i\rangle V_i$.}
\end{lemma}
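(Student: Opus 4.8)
The plan is to verify directly that the operators $W_k$ constitute a Kraus representation of $\Phi$ by substituting their definition into $\sum_k W_k\rho W_k^*$ and collapsing the resulting double sum over the original Kraus operators $V_i$ by means of the overcompleteness relation $\sum_k|\psi_k\rangle\langle\psi_k|=I_E$. This is the familiar ``unitary (more precisely, isometric) freedom in the choice of Kraus operators'', specialized to the change-of-frame matrix $c_{ki}=\langle\psi_k|i\rangle$, whose columns are orthonormal precisely because $\{|\psi_k\rangle\}_k$ is overcomplete.

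First I would record that $W_k^*=\sum_{j=1}^n\langle j|\psi_k\rangle V_j^*$, so that for an arbitrary $\rho\in\mathfrak{S}(\mathcal{H}_A)$
$$
\sum_k W_k\rho W_k^*=\sum_k\sum_{i,j=1}^n\langle\psi_k|i\rangle\langle j|\psi_k\rangle\,V_i\rho V_j^*=\sum_{i,j=1}^n\Bigl(\sum_k\langle j|\psi_k\rangle\langle\psi_k|i\rangle\Bigr)V_i\rho V_j^*.
$$
The inner scalar sum equals $\langle j|\bigl(\sum_k|\psi_k\rangle\langle\psi_k|\bigr)|i\rangle=\langle j|I_E|i\rangle=\delta_{ij}$, whence the right-hand side reduces to $\sum_{i=1}^n V_i\rho V_i^*=\Phi(\rho)$. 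By linearity this identity then holds on all of $\mathfrak{B}(\mathcal{H}_A)$, and completeness of the collection $\{W_k\}$, i.e.\ $\sum_k W_k^*W_k=I_{\mathcal{H}_A}$, follows by the same collapse applied to $\sum_k W_k^*W_k=\sum_{i,j}\bigl(\sum_k\langle j|\psi_k\rangle\langle\psi_k|i\rangle\bigr)V_j^*V_i=\sum_i V_i^*V_i=I_{\mathcal{H}_A}$ (alternatively it is automatic once $\sum_kW_k\rho W_k^*=\Phi(\rho)$ is known, since $\Phi$ is trace preserving).

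I do not expect a genuine obstacle here; the only point that warrants a word of care is the interchange of the summation over $k$ with the finite double sum over $i,j$. This is legitimate because $\mathcal{H}_E$ is finite dimensional, so the defining series $\sum_k|\psi_k\rangle\langle\psi_k|=I_E$ converges in norm, hence absolutely, and therefore each of the finitely many numerical series $\sum_k\langle j|\psi_k\rangle\langle\psi_k|i\rangle$ converges absolutely and may be rearranged freely. Everything else is a routine manipulation of bra-ket expressions.
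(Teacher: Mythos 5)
Your computation is correct and is essentially identical to the paper's proof: substitute $W_k$, interchange the sums, and collapse $\sum_k\langle j|\psi_k\rangle\langle\psi_k|i\rangle$ to $\delta_{ij}$ via the overcompleteness relation $\sum_k|\psi_k\rangle\langle\psi_k|=I_E$. The extra remarks on trace preservation and the legitimacy of interchanging summations are harmless additions but not needed beyond what the paper does.
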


\begin{proof}
Since $\sum_k|\psi_k\rangle\langle \psi_k|=I_E$\rule[-7pt]{0pt}{1pt}, we have
$$
\begin{aligned}
\sum_kW_k\rho W^*_k&=\sum^n_{i,j=1} V_i\rho V^*_j \sum_k\langle\psi_k|i\rangle\langle
j|\psi_k\rangle\\ & =\sum^n_{i,j=1} V_i\rho V^*_j \sum_k\Tr |i\rangle\langle
j||\psi_k\rangle\langle\psi_k|&=\sum^n_{i=1} V_i\rho V^*_i.
\end{aligned}
$$
\end{proof}

\begin{remark}\label{q-c-ch}
The assertions of Theorem~\ref{e-b-ch} agree with the obtained in~\cite{H-CQM} criterion for the equality $C_{\mathrm{ea}}(\Phi)=\bar{C}(\Phi)$ for the quantum-classical channel
$$
\Phi(\rho)=\sum_k[\Tr M_k\rho]|k\rangle\langle k|
$$
defined by the collection $\{M_k\}$ of positive operators in $\mathcal{H}_A$ such that $\sum_kM_k=I_A$, where $\{|k\rangle\}$ is an orthonormal basis in $\mathcal{H}_B$. Indeed, it is easy to see that this channel is classical-quantum if and only if \/$M_kM_l=M_lM_k$ for all $k,l$.
\end{remark}\medskip

Since $\mathcal{H}^\chi_\Phi=\mathcal{H}_A$ means existence of an optimal ensemble for the channel $\Phi$ with a full rank average state, Theorem~\ref{e-b-ch} implies the following criterion for coincidence of the capacities.
\medskip

\begin{corollary}\label{e-b-ch-c}
\emph{Let \/ $\Phi$ be a quantum channel
for which there exists an optimal ensemble with a full rank
average state. Then}
$$
C_{\mathrm{ea}}(\Phi)=\bar{C}(\Phi)\quad\Leftrightarrow\quad\Phi\ \textit{is a classical-quantum channel}.
$$
\end{corollary}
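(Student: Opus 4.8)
The plan is to derive Corollary~\ref{e-b-ch-c} directly from Theorem~\ref{e-b-ch} together with the observation recorded just above the statement. First I would note the equivalence: the hypothesis that there exists an optimal ensemble for $\Phi$ with a full rank average state is exactly the condition $\mathcal{H}^\chi_\Phi=\mathcal{H}_A$. Indeed, if some optimal ensemble has a full rank average state then its elements span all of $\mathcal{H}_A$, so the minimal subspace containing elements of all optimal ensembles is $\mathcal{H}_A$; conversely, if $\mathcal{H}^\chi_\Phi=\mathcal{H}_A$, one can take a finite convex combination (with strictly positive weights) of optimal ensembles whose elements together span $\mathcal{H}_A$ — this combination is again an optimal ensemble, by convexity/linearity of the optimality condition in terms of $\chi^{}_\Phi$ — and its average state has full rank. (A cleaner way: the set of average states of optimal ensembles is a convex subset of $\mathfrak{S}(\mathcal{H}_A)$, and its elements collectively have supports spanning $\mathcal{H}^\chi_\Phi$, so some convex combination has support equal to $\mathcal{H}^\chi_\Phi$.)

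Granting this, under the hypothesis of the corollary we have $\mathcal{H}^\chi_\Phi=\mathcal{H}_A$, hence the $\chi$-essential part $\Phi_\chi$ of $\Phi$ coincides with $\Phi$ itself. Now the two implications are immediate. For the direction $(\Leftarrow)$: if $\Phi$ is classical-quantum, Theorem~\ref{e-b-ch},\,A) gives $C_{\mathrm{ea}}(\Phi)=\bar{C}(\Phi)$, and this direction does not even use the full rank hypothesis. For the direction $(\Rightarrow)$: if $C_{\mathrm{ea}}(\Phi)=\bar{C}(\Phi)$, then by Theorem~\ref{e-b-ch},\,B) the $\chi$-essential part of $\Phi$ is classical-quantum; but since $\Phi_\chi=\Phi$ by the hypothesis, $\Phi$ itself is classical-quantum.

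The only step requiring any care is the equivalence between "existence of an optimal ensemble with full rank average state" and "$\mathcal{H}^\chi_\Phi=\mathcal{H}_A$"; everything else is a direct invocation of Theorem~\ref{e-b-ch}. I expect this to be the main (and really the only) obstacle, and it is a mild one: the essential point is that a convex combination of optimal ensembles is again optimal, which follows because the optimality equality $\chi^{}_\Phi(\bar\rho)=\bar{C}(\Phi)$ together with the defining property $\sum_i\pi_i H(\Phi(\rho_i)\|\Phi(\bar\rho))=\chi^{}_\Phi(\bar\rho)$ is preserved under mixing (one can check that mixing two optimal ensembles yields an ensemble whose average state still attains $\bar{C}(\Phi)$ and whose decomposition still attains the maximum in~\eqref{chi-fun}, using concavity of $H(\Phi(\cdot))$ and the characterization of optimal ensembles via~\eqref{chi-fun+}). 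Since the excerpt already treats the remark "$\mathcal{H}^\chi_\Phi=\mathcal{H}_A$ means existence of an optimal ensemble for the channel $\Phi$ with a full rank average state" as evident, I would simply cite it and pass directly to the application of Theorem~\ref{e-b-ch}.
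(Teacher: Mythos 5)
Your proposal is correct and follows the paper's own route: the paper likewise observes that the full-rank-average-state hypothesis is exactly $\mathcal{H}^\chi_\Phi=\mathcal{H}_A$ (so $\Phi_\chi=\Phi$) and then invokes Theorem~\ref{e-b-ch}, parts A) and B). Your extra justification that a mixture of optimal ensembles is again optimal (via concavity of $H\circ\Phi$ and representation~\eqref{chi-fun+}) is sound, just more detail than the paper spells out.
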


The following example proposed in~\cite{BSST+} (as an example of non--entanglement-breaking channel such that $C_{\mathrm{ea}}(\Phi)=\bar{C}(\Phi)$) shows that the "full rank average state" condition in Corollary~\ref{e-b-ch-c} is essential.\medskip

\begin{example}\label{n-e-b-ch}
Let $\mathcal{H}_1$, $\mathcal{H}_2$ and $\mathcal{H}_3$ be qubit spaces. Let
$\left\{|k\rangle\right\}_{k=1}^4$ and $\{|-\rangle, |+\rangle\}$ be orthonormal bases
in $\mathcal{K}=\mathcal{H}_1\otimes\mathcal{H}_2$ and in $\mathcal{H}_3$ correspondingly. Consider the channel
$$
\Phi(\rho)=\sum_{k=1}^4\left[\langle
k|\otimes\langle+|\right]\rho\left[|k\rangle\otimes|+\rangle\right]|k\rangle\langle
k|+\textstyle\frac{1}{2}\displaystyle I_{\mathcal{H}_2}
\otimes\Tr_{\mathcal{H}_2\otimes\mathcal{H}_3}[I_{\mathcal{K}}
\otimes|-\rangle\langle -|]\rho
$$
from $\mathfrak{S}(\mathcal{K}\otimes\mathcal{H}_3)$ into $\mathfrak{S}(\mathcal{K})$. It is easy to show that
$C_{\mathrm{ea}}(\Phi)=\bar{C}(\Phi)=2$ and $Q(\Phi)=1$ \cite{BSST+}. Thus the channel $\Phi$ is non-entanglement-breaking and hence it is not classical-quantum.

Since $\bar{C}(\Phi)=2=\log\dim\mathcal{K}$, any optimal ensemble for the
channel $\Phi$ can not contain states with nonzero output entropy.
Thus the subspace $\mathcal{H}_\Phi^\chi$ consists of vectors
$|\varphi\rangle\otimes|+\rangle$, $|\varphi\rangle\in\mathcal{K}$. Hence the $\chi$-essential part of the channel
$\Phi$ is isomorphic to the classical-quantum channel
$\rho\mapsto\sum_{k=1}^4\langle k|\rho|k\rangle|k\rangle\langle k|$ (in accordance with Theorem~\ref{e-b-ch},\,B)).
\end{example}

\subsection{On covariant channels}

The class of channels, for which the conditions of the parts B and C
of Proposition~\ref{simple} and of Corollary \ref{e-b-ch-c} hold simultaneously, contains any
channel $\Phi$  \emph{\/covariant} with respect to representations $\left\{V_g\right\}_{g\in G}$ and
$\left\{W_g\right\}_{g\in G}$ of a compact group~$G$ in
the sense that
\begin{equation}\label{cov}
\Phi(V_g\rho V^*_g)=W_g\Phi(\rho)W^*_g,\quad\forall g\in G,
\end{equation}
provided the representation $\left\{V_g\right\}_{g\in G}$ is irreducible.
Indeed, irreducibility of the representation $\left\{V_g\right\}_{g\in G}$
implies
\begin{equation}\label{mixt}
\rho_c\doteq(\dim\mathcal{H}_A)^{-1}I_A=\int_G V_g\rho V^*_g
\,\mu_H(dg),\quad \forall\rho\in\mathfrak{S}(\mathcal{H}_A),
\end{equation}
where $\mu_H$ is the Haar measure on the group $G$~\cite{H-r-c-c}. So, to prove that
\begin{equation}\label{coin}
\bar{C}(\Phi)=\chi^{}_\Phi(\rho_c),
\quad\bar{C}(\widehat{\Phi})=\chi^{}_{\widehat{\Phi}}(\rho_c),\quad
C_{\mathrm{ea}}(\Phi)=I(\rho_c,\Phi)
\end{equation}
it is sufficient, by concavity of the $\chi$-function and of the
quantum mutual information, to show that
\begin{equation}\label{inv}
\chi^{}_\Phi(\rho)=\chi^{}_\Phi(V_g\rho V^*_g),\;\;
\chi^{}_{\widehat{\Phi}}(\rho)=\chi^{}_{\widehat{\Phi}}(V_g\rho
V^*_g),\;\; I(\rho,\Phi)=I(V_g\rho V^*_g,\Phi)
\end{equation}
for all $g\in G$ and $\rho\in\mathfrak{S}(\mathcal{H}_A)$.

The first and the third  equalities in~\eqref{inv} can be easily
proved by using~\eqref{chi-fun} and the well known expression for
the quantum mutual information via the relative entropy (by means of
invariance of the relative entropy with respect to unitary
transformations of the both their arguments). By these equalities the
second one follows from \eqref{mi-rep}.

The class of covariant channels is sufficiently large, it contains
all unital qubit channels and nontrivial classes of channels in
higher dimensions~\cite{H&F,H-r-c-c}.

By using~\eqref{mixt} and~\eqref{coin} it is easy to show that
(cf.\cite{H-r-c-c})
\begin{equation}\label{cap-exp}
\begin{gathered}
\bar{C}(\Phi)=H(\Phi(\rho_c))-H_{\min}(\Phi),
\qquad\bar{C}(\widehat{\Phi})=H(\widehat{\Phi}(\rho_c))-H_{\min}(\Phi),\\\\
C_{\mathrm{ea}}(\Phi)=\log\dim\mathcal{H}_A+H(\Phi(\rho_c))
-H(\widehat{\Phi}(\rho_c))
\end{gathered}
\end{equation}
for any channel $\Phi\colon\mathfrak{S}(\mathcal{H}_A)\to\mathfrak{S}(\mathcal{H}_B)$ satisfying the
above covariance condition, where
$H_{\min}(\Phi)=\min_{\rho\in\mathfrak{S}(\mathcal{H}_A)}H(\Phi(\rho))$ is the minimal output entropy of the
channel $\Phi$ (coinciding with $H_{\min}(\widehat{\Phi})$). If, in addition, the
representation $\left\{W_g\right\}_{g\in G}$ is also irreducible then
$H(\Phi(\rho_c))$ in~\eqref{cap-exp} can be replaced by $\log\dim\mathcal{H}_B$~\cite{H-r-c-c}.\smallskip

Let $Q_1(\Phi)=\max_{\rho\in\mathfrak{S}(\mathcal{H}_A)}I_c(\rho, \Phi)$ and
$Q(\Phi)=\lim_{n\to+\infty} n^{-1}Q_1(\Phi^{\otimes n})$ be the quantum capacity of the channel $\Phi$. By the above
observations Proposition~\ref{simple} and Corollary~\ref{e-b-ch-c}
imply the following assertions.\smallskip

\begin{property}\label{covariant}
\emph{Let
$\,\Phi\colon\mathfrak{S}(\mathcal{H}_A)\to\mathfrak{S}(\mathcal{H}_B)$ be a channel
satisfying covariance condition~\eqref{cov}. Then}
$$
C_{\mathrm{ea}}(\Phi)=\bar{C}(\Phi)\quad\Leftrightarrow\quad \Phi\ \textit{is a classical-quantum channel}.
$$

\emph{If, in addition, $\dim\mathcal{H}_B\ge\dim\mathcal{H}_A$ and the representation \/ $\left\{W_g\right\}_{g\in
G}$ is irreducible
then
}$$
\begin{gathered}
C_{\mathrm{ea}}(\Phi)-\bar{C}(\Phi)=
\log\dim\mathcal{H}_A-\bar{C}(\widehat{\Phi})\le Q_1(\Phi)+H_{\min}(\Phi),\\
C_{\mathrm{ea}}(\Phi)-C(\Phi)=\log\dim\mathcal{H}_A-C(\widehat{\Phi})\le
Q(\Phi)+H_{\min}(\Phi).
\end{gathered}
$$
\end{property}

\begin{proof}
If the representation $\left\{W_g\right\}_{g\in G}$ is
irreducible then it is easy to show that $\Phi((\dim\mathcal{H}_A)^{-1}I_A)
=(\dim\mathcal{H}_B)^{-1}I_B$~\cite{H-r-c-c}. This and the condition $\dim\mathcal{H}_B\ge\dim\mathcal{H}_A$ imply
$H(\Phi(\rho))\ge H(\rho)$ for any $\rho\in\mathfrak{S}(\mathcal{H}_A)$ by monotonicity
of the relative entropy. Coincidence of the last term in~\eqref{one} and
\eqref{two} with~$H_{\min}(\Phi)$ follows from~\eqref{mixt} and~\eqref{coin}.
\end{proof}

\subsection{On degradable and anti-degradable channels}

Expression~\eqref{mi-rep} and the chain rule for the $\chi$-function
(i.e. $\chi^{}_{\Psi\circ\Phi}\le\chi^{}_\Phi$) show that
\begin{equation}\label{d-b}
C_{\mathrm{ea}}(\Phi_1)\le\log\dim\mathcal{H}_A\le C_{\mathrm{ea}}(\Phi_2)
\end{equation}
for any anti-degradable channel $\Phi_1$ and any degradable channel
$\Phi_2$.\footnote{A channel $\Phi$ is called degradable if
$\widehat{\Phi}=\Psi\circ\Phi$ for some channel $\Psi$, a channel
$\Phi$ is called anti-degradable if $\widehat{\Phi}$ is a degradable
channel~\cite{R&Co}.} By using the Petz theorem~\cite[Theorem~3]{Petz} one can show that if the first (correspondingly, the second)
inequality in~\eqref{d-b} holds with an equality then the
anti-degradable channel $\Phi_1$ is degradable (correspondingly, the
degradable channel $\Phi_2$ is anti-degradable).\smallskip

The second inequality in~\eqref{d-b} and Theorem~\ref{e-b-ch}
imply the following assertion.

\begin{property}\label{degradable}
\emph{If \/ $\Phi\colon\mathfrak{S}(\mathcal{H}_A)\to\mathfrak{S}(\mathcal{H}_B)$ is a degradable
channel then one of the following alternatives holds:
\begin{itemize}
\item
$\bar{C}(\Phi)<C_{\mathrm{ea}}(\Phi)$;
\item
$\Phi$ is a classical-quantum channel having the representation
\begin{equation}\label{c-q-rep+}
\Phi(\rho)=\sum_{k=1}^{\dim\mathcal{H}_A}\langle k|\rho|k\rangle\sigma_k,\quad
\rho\in\mathfrak{S}(\mathcal{H}_A),
\end{equation}
where $\{|k\rangle\}$ is an orthonormal basis in $\mathcal{H}_A$ and $\{\sigma_k\}$ is a collection of states in
$\,\mathfrak{S}(\mathcal{H}_B)$ with mutually orthogonal supports.
\end{itemize}}
\end{property}
\smallskip

\begin{proof}
Suppose that $\bar{C}(\Phi)=C_{\mathrm{ea}}(\Phi)$. Since $\bar{C}(\Phi)\le\log\dim\mathcal{H}_A$
for any channel $\Phi$, the second inequality in~\eqref{d-b} shows that $\bar{C}(\Phi)=\log\dim\mathcal{H}_A$ and
hence the average state of any optimal ensemble for the channel
$\Phi$ coincides with the chaotic state in
$\mathfrak{S}(\mathcal{H}_A)$. By Corollary \ref{e-b-ch-c} $\Phi$ is a classical-quantum channel having representation \eqref{c-q-rep+}, in which
$\{|k\rangle\}$ is an orthonormal basis in $\mathcal{H}_A$ and $\{\sigma_k\}$ is a collection of states in $\mathfrak{S}(\mathcal{H}_B)$.
We will show that the supports of these states are mutually orthogonal.

Let $\sigma_k=\sum_{i=1}^{\dim\mathcal{H}_B}|\psi_{ki}\rangle\langle \psi_{ki}|$. Then
$\Phi(\rho)=\sum_{k,i}W_{ki}\rho W^*_{ki}$, where $W_{ki}= |\psi_{ki}\rangle\langle
k|$, and by using the standard representation for a complementary channel~(cf.~\cite{H-comp-ch}) we obtain
$$
\widehat{\Phi}(\rho)=\sum_{k,\,l=1}^{\dim\mathcal{H}_A}\langle
k|\rho|l\rangle|k\rangle\langle l|
\otimes\sum_{i,\,j=1}^{\dim\mathcal{H}_B}\langle\psi_{lj}|
\psi_{ki}\rangle|i\rangle\langle
j|\in\mathfrak{S}(\mathcal{H}_A\otimes\mathcal{H}_B).
$$
Since $\Phi$ is a degradable channel with representation~\eqref{c-q-rep+}, we have $\widehat{\Phi}(|k\rangle\langle
l|)=\Psi\circ\Phi(|k\rangle\langle l|)=0$ for all $k\ne l$. Hence the above expression for the channel $\widehat{\Phi}$ implies
$\langle\psi_{lj}|\psi_{ki}\rangle=0$ for all $i,j$ and all $k\ne l$. It follows that
$\supp\sigma_k\perp\supp\sigma_l$ for all $k\ne l$.
\end{proof}

\section{On channels with linear constraints}

Speaking about different capacities of channels between finite
dimensional quantum systems we can use any states for coding
information. But dealing with real infinite dimensional channels
we have to impose particular constraints on the choice of input
code-states to avoid infinite values of the capacities and to be
consistent with the physical implementation of the process of
information transmission. A typical physically motivated constraint
is defined by the requirement of bounded energy of states used for
coding information. This constraint can be called linear, since it
is determined by the linear inequality
\begin{equation}\label{lc}
\Tr H\rho\le h,\quad h>0,
\end{equation}
where $H$ is a positive operator -- Hamiltonian of the input quantum
system. Operational definitions of the Holevo capacity, the unassisted and
the entanglement-assisted classical capacities of a quantum channel with
linear constraints are given in~\cite{H-c-w-c}, where the
corresponding generalizations of the Holevo-Schumacher-Westmoreland and Bennett-Shor-Smolin-Thapliyal theorems are
proved.

The aim of this section is to study relations between the above capacities of a quantum
channel with linear constraints, in particular, to show that the
question of coincidence of these capacities for a given channel depends on the form of the constraint.

For simplicity we restrict attention to the finite
dimensional case.\footnote{Generalizations to infinite dimensions are considered in the second part of~\cite{Sh}.}

The Holevo capacity of the channel $\Phi$ with constraint~\eqref{lc}
can be defined as follows
$$
\bar{C}(\Phi,H,h)=\max_{\Tr H\rho\le h}\chi^{}_\Phi(\rho),
$$
where $\chi^{}_\Phi$ is the $\chi$-function of the channel $\Phi$ defined in~\eqref{chi-fun}. An ensemble $\{\pi_i,\rho_i\}$ of pure states with the average state $\bar{\rho}$ is called
\emph{\/optimal} for the channel $\Phi$ with constraint~\eqref{lc} if
$$
\bar{C}(\Phi,H,h)=\chi^{}_\Phi(\bar{\rho})=\sum_i\pi_i
H(\Phi(\rho_i)\|\Phi(\bar{\rho}))\qquad \textrm{and}\qquad \Tr H\bar{\rho}\le h.
$$

By the generalized Holevo-Schumacher-Westmoreland theorem \cite[Proposition 3]{H-c-w-c} the
classical capacity of the channel $\Phi$ with constraint \eqref{lc}
can be expressed by the following regularization formula
$$
C(\Phi,H,h)=\lim_{n\to+\infty} n^{-1}\bar{C}(\Phi^{\otimes n}, H_n, nh),
$$
where $H_n=H\otimes I\otimes\ldots\otimes I+I\otimes H\otimes I\otimes\ldots\otimes
I+\ldots+I\otimes\ldots\otimes I\otimes H$ (each of $n$
summands consists of $n$ multiples).\smallskip

By the generalized Bennett-Shor-Smolin-Thapliyal theorem \cite[Proposition 4]{H-c-w-c} the
entanglement-assisted capacity of the channel $\Phi$  with
constraint~\eqref{lc} is determined as follows
$$
C_{\mathrm{ea}}(\Phi,H,h)=\max_{\Tr H\rho\le h}I(\rho, \Phi),
$$
where $I(\rho, \Phi)$ is the quantum mutual information of the
channel $\Phi$ at the state $\rho$ defined after~\eqref{ea-cap}.\smallskip

Almost all the results of Section~2 concerning relations between
the capacities  $\bar{C}(\Phi)$ and $C_{\mathrm{ea}}(\Phi)$ can be
reformulated for the corresponding capacities of a constrained
channel. For example, instead of~\eqref{d-in} we have
$$
\begin{array}{cc}
H(\rho_1)-\bar{C}(\widehat{\Phi},H,h)  \le
C_{\mathrm{ea}}(\Phi,H,h)-\bar{C}(\Phi,H,h)\\\\  \le
H(\rho_2)-\chi^{}_{\widehat{\Phi}}(\rho_2)\mathrel{\underset{H(\Phi(\cdot))\ge
H(\cdot)}{\le}}H(\Phi(\rho_2))-\chi^{}_{\widehat{\Phi}}(\rho_2)=
I_c(\rho_2,\Phi)+\hat{H}_\Phi(\rho_2),
\end{array}
$$
where $\rho_1$ and $\rho_2$ are states in $\mathfrak{S}(\mathcal{H}_A)$ such that $\Tr H\rho_i\le h$, $i=1,2$,
$\chi^{}_\Phi(\rho_1)=\bar{C}(\Phi,H,h)$ and $I(\rho_2,\Phi)=C_{\mathrm{ea}}(\Phi,H,h)$.\smallskip

By repeating the corresponding proofs it is easy to obtain the
following  proposition.

\begin{property}\label{coin-crit+}
\emph{The assertions of Proposition~$\ref{simple}$, Theorem\/~$\ref{coin-crit}$ and Theorem\/~{\ref{e-b-ch},\,B)}
remain valid with \/$\bar{C}(\Phi)$ and \/$C_{\mathrm{ea}}(\Phi)$ replaced
respectively by\break \/$\bar{C}(\Phi,H,h)$ and
\/$C_{\mathrm{ea}}(\Phi,H,h)$ (under the natural definition of the $\chi$-essential part of the channel\/ $\Phi$ with constraint~\eqref{lc}).
\/The assertions of Theorem~{\ref{e-b-ch},\,A)} remains valid under this replacement if the basis  $\{|k\rangle\}$ in representation~\eqref{c-q-rep} of the channel \/ $\Phi$ consists of eigenvectors of the operator $H$.}
\end{property}\medskip

The following example shows that the assertion of Theorem~\ref{e-b-ch},\,A) without the additional condition is not valid for constrained channels.
\medskip

\begin{example}\label{classical-costrained}
Consider the classical-quantum channel
$$
\Pi(\rho)=\sum_k\langle k|\rho|k\rangle|k\rangle\langle k|,
$$
where $\{|k\rangle\}$ is an orthonormal basis in $\mathcal{H}_A=\mathcal{H}_B$. Let $h<(\dim\mathcal{H}_A)^{-1}\Tr
H$.\smallskip

By using  the generalized version of Theorem~\ref{coin-crit} we will
show that \emph{$$C_{\mathrm{ea}}(\Pi,H,h)=\bar{C}(\Pi,H,h)$$
if and only if the operator $H$ is diagonizable in the basis~$\{|k\rangle\}$}.\medskip

Since $\Pi=\widehat{\Pi}$, we have $I(\rho,\Pi)=H(\rho)$ and $C_{\mathrm{ea}}(\Pi,H,h)=\max_{\Tr
H\rho\le h}H(\rho)$. By using
the Lagrange method it is easy to show that the above  maximum is
attained at the unique state $\rho_*=(\Tr\exp(-\lambda H))^{-1}\exp(-\lambda H)$, where $\lambda$ is determined by the
equation $\Tr H\exp(-\lambda H)=h\Tr\exp(-\lambda H)$. If
$C_{\mathrm{ea}}(\Pi,H,h)=\bar{C}(\Pi,H,h)$ then Theorem~\ref{coin-crit} implies existence of an ensemble
$\{\pi_i, \rho_i\}$ of pure states with the average state~$\rho_*$ such that
$$
\rho_i=\rho_*^{1/2}\Pi^*\left([\Pi(\rho_*)]^{-1/2}\Pi(\rho_i)
[\Pi(\rho_*)]^{-1/2}\right) \rho_*^{1/2},\quad \forall i.
$$
Since $\Pi^*=\Pi$ and $\rho_*$ is a full rank state, this equality may be valid only if $\rho_i=|k\rangle\langle k|$ for some $k$. Thus $\{|k\rangle\}$ is a basis of eigenvectors for the state $\rho_*$ and hence for the
operator $H$.

If the operator $H$ is diagonizable in the basis $\{|k\rangle\}$
then $\rho_*=\sum_k\pi_k|k\rangle\langle k|$ and hence
$$
\bar{C}(\Pi,H,h)\ge\sum_k\pi_k H(\Pi(|k\rangle\langle
k|)\|\Pi(\rho_*))=H(\rho_*)=C_{\mathrm{ea}}(\Pi,H,h).
$$
\end{example}\medskip

Proposition~\ref{degradable} is generalized as follows.\smallskip

\begin{property}\label{degradable-cv}
\emph{Let\/ $\Phi\colon\mathfrak{S}(\mathcal{H}_A)\to\mathfrak{S}(\mathcal{H}_B)$ be a degradable
channel,
$H$ a positive operator, $h>0$ and $\,h_*=(\dim\mathcal{H}_A)^{-1}\Tr H$. Then one of the following alternatives holds:
\begin{itemize}
\item
$\bar{C}(\Phi,H,h)<C_{\mathrm{ea}}(\Phi,H,h)$;
\item
$\Phi$ is a classical-quantum channel having the representation
\begin{equation}\label{c-q-rep++}
\Phi(\rho)=\sum_{k=1}^{\dim\mathcal{H}_A}\langle k|\rho|k\rangle\sigma_k,\quad
\rho\in\mathfrak{S}(\mathcal{H}_A),
\end{equation}
where  $\{\sigma_k\}$ is a collection of states in $\mathfrak{S}(\mathcal{H}_B)$ with mutually orthogonal supports and   $\{|k\rangle\}$
\renewcommand{\labelitemii}{\strut}
\begin{itemize}
\item\hskip-5pt
- is an orthonormal basis in $\mathcal{H}_A$, if $h\ge h_*$;
\item\hskip-5pt
- is the orthonormal basis of eigenvectors of the operator $H$, if $h<h_*$.
\end{itemize}
\end{itemize}
}\end{property}

\begin{proof}
Since $\chi^{}_\Phi(\rho)\le H(\rho)$ and $I(\rho,\Phi)\ge H(\rho)$ ($\Phi$ is a degradable channel),
the equality $\bar{C}(\Phi,H,h)=C_{\mathrm{ea}}(\Phi,H,h)$ may be valid only if
$$
\bar{C}(\Phi,H,h)=C_{\mathrm{ea}}(\Phi,H,h)=\max_{\Tr H\rho\le h}H(\rho).
$$

If $h\ge h_*$ then this maximum coincides with $\log\dim\mathcal{H}_A$, which means that the constraint has no effect and hence the second alternative in Proposition~\ref{degradable} holds.\smallskip

If $h<h_*$ then the above maximum is always attained at a full rank state and the generalized version of Theorem~\ref{e-b-ch},\,B) implies that $\Phi$ is a classical-quantum channel having representation~\eqref{c-q-rep++}. Similar to the proof of Proposition~\ref{degradable} one can show that
the states in the collection $\{\sigma_k\}$ have mutually orthogonal supports.

Show that the equality $\bar{C}(\Phi,H,h)=C_{\mathrm{ea}}(\Phi,H,h)$ may be valid in the case $h<h_*$ if and only if the operator $H$ is diagonizable in the basis $\{|k\rangle\}$ from representation
\eqref{c-q-rep++} of the channel $\Phi$. For the channel $\Pi(\rho)=\sum_k\langle
k|\rho|k\rangle|k\rangle\langle k|$ this assertion is proved in Example \ref{classical-costrained}. To prove it in general case it suffices to note that $\bar{C}(\Phi,H,h)=\bar{C}(\Pi,H,h)$ and $C_{\mathrm{ea}}(\Phi,H,h)=C_{\mathrm{ea}}(\Pi,H,h)$. These equalities follow from the chain rules for the capacities, since it is easy to construct channels $\Psi_1$ and $\Psi_2$ such that
$\Pi=\Psi_1\circ\Phi$ and $\Phi=\Psi_2\circ\Pi$.
\end{proof}\smallskip

The following proposition shows that coincidence of
$\bar{C}(\Phi,H,h)$ and\break  $C_{\mathrm{ea}}(\Phi,H,h)$ for any
constraint parameters $(H,h)$ is a very strong requirement.
\smallskip

\begin{property}\label{e-b-ch-c+}
\emph{If\/ $\Phi\colon\mathfrak{S}(\mathcal{H}_A)\to\mathfrak{S}(\mathcal{H}_B)$ is a quantum channel such that
$C_{\mathrm{ea}}(\Phi,H,h)=\bar{C}(\Phi,H,h)$ for any operator $H\ge0$ and $h>0$ then\/ $\Phi$ is a classical-quantum channel such that
\/ $\chi^{}_{\widehat{\Phi}}(\rho)=H(\rho)$ for all \/ $\rho\in\mathfrak{S}(\mathcal{H}_A)$. If the below Conjecture is true then  $\Phi$ is
the completely depolarizing channel.}
\end{property}

\begin{proof}
By  Lemma 1 in~\cite{H-Sh-1} an arbitrary full rank
state $\rho$ in $\mathfrak{S}(\mathcal{H}_A)$ can be made the average state of an
optimal ensemble for the channel $\Phi$ with constraint~\eqref{lc}
by appropriate choice of the operator $H$. Hence the condition of
the proposition and continuity arguments imply
$I(\rho,\Phi)=\chi^{}_\Phi(\rho)$ for any state $\rho$ in $\mathfrak{S}(\mathcal{H}_A)$.
By expression~\eqref{mi-rep} this means that $\chi^{}_{\widehat{\Phi}}(\rho)=H(\rho)$ for any state $\rho$
in $\mathfrak{S}(\mathcal{H}_A)$. By the generalized version of Theorem~\ref{e-b-ch},\,B) $\Phi$ is a classical-quantum channel.
\end{proof}

\textbf{Conjecture.} \emph{If \/
$\Phi\colon\mathfrak{S}(\mathcal{H}_A)\to\mathfrak{S}(\mathcal{H}_B)$ is a quantum
channel such that \/
$\chi^{}_\Phi(\rho)=H(\rho)$ for all \/ $\rho\in\mathfrak{S}(\mathcal{H}_A)$
then the channel \/ $\Phi$ coincides (up to unitary
equivalence) with the channel \/ $\rho\mapsto\rho\otimes\sigma$ for
some state $\sigma$.}

\section{The function $\Delta_\Phi(\rho)=I(\rho,
\Phi)-\chi^{}_\Phi(\rho)$ and its maximal value}

Central role in analysis of relations between
entanglement-assisted and unassisted classical capacities of a
quantum channel $\Phi$ is played by the function
$$
\Delta_\Phi(\rho)=I(\rho, \Phi)-\chi^{}_\Phi(\rho)
$$
introduced in Section~2, where it was mentioned that
$$
\Delta_\Phi(\rho)=H(\rho)-\chi^{}_{\widehat{\Phi}}(\rho)
=\min_{\substack{\sum_i\pi_i\rho_i=\rho\\ \rank\rho_i=1}}\,\sum_i\pi_i
\left[H(\rho_i\|\rho)-H(\widehat{\Phi}(\rho_i)\|\widehat{\Phi}(\rho))\right]
$$
and that the above minimum is achieved at an ensemble
$\{\pi_i,\rho_i\}$ of pure states if and only if this ensemble is
$\chi^{}_\Phi$-optimal in the sense of the following
definition.\smallskip

\begin{definition}\label{chi-opt}
An ensemble $\{\pi_i,\rho_i\}$ of pure states is called \emph{$\chi^{}_\Phi$-optimal} if the maximum in definition~\eqref{chi-fun} of the $\chi$-function of the channel $\Phi$ is
achieved at this ensemble.
\end{definition}\smallskip

Since $\hat{H}_\Phi\equiv\hat{H}_{\widehat{\Phi}}$, any $\chi^{}_\Phi$-optimal ensemble
is $\chi^{}_{\widehat{\Phi}}$-optimal and vice versa.\smallskip

The above formula for the function $\Delta_\Phi$ and monotonicity of the relative entropy imply the following observation.
\smallskip

\begin{lemma}\label{last-l}
\emph{If $\,\Phi$ is a degradable channel then  $\Delta_\Phi(\rho)\ge\Delta_{\widehat{\Phi}}(\rho)$ for all  $\,\rho$.}
\end{lemma}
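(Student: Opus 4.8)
The plan is to reduce the claimed inequality to a single application of monotonicity of the relative entropy, using the variational formula for $\Delta_\Phi$ recalled just above the statement. First I would record the two representations
$$
\Delta_\Phi(\rho)=\min_{\{\pi_i,\rho_i\}}\sum_i\pi_i\bigl[H(\rho_i\|\rho)-H(\widehat{\Phi}(\rho_i)\|\widehat{\Phi}(\rho))\bigr],
$$
$$
\Delta_{\widehat{\Phi}}(\rho)=\min_{\{\pi_i,\rho_i\}}\sum_i\pi_i\bigl[H(\rho_i\|\rho)-H(\Phi(\rho_i)\|\Phi(\rho))\bigr],
$$
where both minima run over ensembles $\{\pi_i,\rho_i\}$ of pure states with average state $\rho$. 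The second formula uses that the complementary channel is involutive up to unitary equivalence, so $\widehat{\widehat{\Phi}}=\Phi$ and, by unitary invariance of the relative entropy, the relevant terms are exactly $H(\Phi(\rho_i)\|\Phi(\rho))$.

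The crucial input is degradability: by definition $\widehat{\Phi}=\Psi\circ\Phi$ for some channel $\Psi$. Applying monotonicity of the relative entropy to $\Psi$ gives, for any pure state $\rho_i$ in an ensemble with average $\rho$,
$$
H(\widehat{\Phi}(\rho_i)\|\widehat{\Phi}(\rho))=H\bigl(\Psi(\Phi(\rho_i))\|\Psi(\Phi(\rho))\bigr)\le H(\Phi(\rho_i)\|\Phi(\rho)).
$$

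Next I would fix a minimizing ensemble $\{\pi_i^*,\rho_i^*\}$ for $\Delta_\Phi(\rho)$ and run the chain
$$
\Delta_\Phi(\rho)=\sum_i\pi_i^*\bigl[H(\rho_i^*\|\rho)-H(\widehat{\Phi}(\rho_i^*)\|\widehat{\Phi}(\rho))\bigr]\ge\sum_i\pi_i^*\bigl[H(\rho_i^*\|\rho)-H(\Phi(\rho_i^*)\|\Phi(\rho))\bigr]\ge\Delta_{\widehat{\Phi}}(\rho),
$$
where the first inequality is the termwise bound from degradability (negating the inequality of the previous display) and the second is simply that $\Delta_{\widehat{\Phi}}(\rho)$, being a minimum over all ensembles, is bounded above by the value of its functional at the particular ensemble $\{\pi_i^*,\rho_i^*\}$. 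This gives the claim.

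The only point requiring care is the direction of the two minima: the degradability inequality raises each term of the $\Delta_\Phi$ functional, so one must evaluate the $\Delta_{\widehat{\Phi}}$ functional at the \emph{optimizer of $\Delta_\Phi$} (rather than the reverse) and invoke that a minimum lies below any particular value. As a consistency check I would note the closed-form identity $\Delta_\Phi(\rho)-\Delta_{\widehat{\Phi}}(\rho)=\chi^{}_\Phi(\rho)-\chi^{}_{\widehat{\Phi}}(\rho)=I_c(\rho,\Phi)$, so the lemma is equivalent to the nonnegativity of the coherent information of a degradable channel; the ensemble argument above is precisely the elementary proof of that fact.
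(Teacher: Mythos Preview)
Your argument is correct and is precisely the one the paper has in mind: the paper merely states that the lemma follows from ``the above formula for the function $\Delta_\Phi$ and monotonicity of the relative entropy,'' and you have spelled out exactly that derivation. Your closing remark that the lemma is equivalent to $I_c(\rho,\Phi)\ge 0$ for degradable channels (via $\Delta_\Phi-\Delta_{\widehat{\Phi}}=\chi^{}_\Phi-\chi^{}_{\widehat{\Phi}}$) is a correct and useful consistency check.
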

\smallskip

In the following theorem properties of the function
$\Delta_\Phi$ are described.\smallskip

\begin{theorem}\label{delta-fun-p}
\emph{Let\/ $\Phi\colon\mathfrak{S}(\mathcal{H}_A)\to\mathfrak{S}(\mathcal{H}_B)$ be a quantum channel and\break
\/ $\widehat{\Phi}\colon\mathfrak{S}(\mathcal{H}_A) \to\mathfrak{S}(\mathcal{H}_E)$ its complementary
channel. \/ $\Delta_\Phi$ is a nonnegative continuous function on
the set \/ $\mathfrak{S}(\mathcal{H}_A)$
equal to zero on the subset
\/ $\mathop{\rm extr}\mathfrak{S}(\mathcal{H}_A)$ of pure states. It has the following
properties:}
\begin{enumerate}[\rm1)]
\item
\emph{if there exists a channel \/ $\Theta\colon\mathfrak{S}(\mathcal{H}_E)\to\mathfrak{S}(\mathcal{H}_A)$ such that
\begin{equation}\label{theta-2}
\Theta(\widehat{\Phi}(\rho_i))=\rho_i,\quad \forall i,
\end{equation}
for some ensemble \/$\{\pi_i,\rho_i\}$ of pure states with the average state \/$\rho$
    then \/ $\Delta_\Phi(\rho)=0$ and the  ensemble \/ $\{\pi_i,\rho_i\}$ is
\/ $\chi^{}_\Phi$-optimal;}
\item
\emph{if\/ $\Delta_\Phi(\rho)=0$ then
\begin{itemize}
\item
\eqref{theta-2} holds for any $\chi^{}_\Phi$-optimal ensemble
     $\{\pi_i,\rho_i\}$ with the average state  $\rho$, where $\Theta$
is a channel acting on a state $\sigma$ supported by the subspace $\supp\widehat{\Phi}(\rho)$ as follows:
$\Theta(\sigma)=A\widehat{\Phi}^*(B\sigma B)A$, $A=\rho^{1/2}$,
$B=\widehat{\Phi}(\rho)^{-1/2}$;
\item
$\Phi|_{\mathfrak{S}(\mathcal{H}_{\rho})}$ is a classical-quantum subchannel of the channel $\,\Phi$, where
$\mathcal{H}_{\rho}$ is the support of the state $\rho$;
\item
$\Delta_\Phi(\sum_i\lambda_i\rho_i)=0$ for any $\chi^{}_\Phi$-optimal ensemble $\{\pi_i,\rho_i\}$ with the average state $\rho$ and any probability distribution  $\{\lambda_i\}$.
\end{itemize}}
\item
\emph{the function $\Delta_\Phi$ is concave on the set \footnote{The function $\Delta_\Phi$ is not concave on $\mathfrak{S}(\mathcal{H}_A)$ in general, since otherwise we would obtain
$\Delta_\Phi(\rho)\le\Delta_\Phi(\rho_c)=0$ for any covariant
channel $\Phi$ such that
$C_{\mathrm{ea}}(\Phi)=\bar{C}(\Phi)$.} \/ $\Bigl\{\sum_i\lambda_i\rho_i\mid
\sum_i\lambda_i=1,\: \lambda_i\ge0\Bigr\}$ \/ for any $\chi^{}_\Phi$-optimal ensemble
$\{\pi_i,\rho_i\}$;}
\item
monotonicity: \emph{for an arbitrary  channel
\/ $\Psi\colon\mathfrak{S}(\mathcal{H}_B)\to\mathfrak{S}(\mathcal{H}_C)$ the following inequality holds}
$$
\Delta_{\Psi\circ\Phi}(\rho)\le\Delta_\Phi(\rho),\quad
\rho\in\mathfrak{S}(\mathcal{H}_A);
$$
\item
subadditivity for tensor product states: \emph{for an arbitrary
quantum channel \/ $\Psi\colon\mathfrak{S}(\mathcal{H}_C)\to\mathfrak{S}(\mathcal{H}_D)$ the following
inequality holds:
$$
\Delta_{\Phi\otimes\Psi}(\rho\otimes\sigma)\le\Delta_\Phi(\rho)+\Delta_{\Psi}(\sigma),
\quad\rho\in\mathfrak{S}(\mathcal{H}_A),\quad \sigma\in\mathfrak{S}(\mathcal{H}_C),
$$
which is satisfied with an equality if the strong additivity
of the Holevo capacity holds for the channels $\,\Phi$ and $\,\Psi$  (see~{\cite{H-Sh-1}}).}
\end{enumerate}
\end{theorem}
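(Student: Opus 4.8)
The plan is to work from the two identities $\Delta_\Phi(\rho)=I(\rho,\Phi)-\chi^{}_\Phi(\rho)=H(\rho)-H(\widehat{\Phi}(\rho))+\hat{H}_\Phi(\rho)$ and $\Delta_\Phi(\rho)=H(\rho)-\chi^{}_{\widehat{\Phi}}(\rho)$, both consequences of \eqref{chi-fun+} and \eqref{mi-rep}. Nonnegativity is \eqref{delta-rep}; continuity holds because $H(\rho)$, $H(\widehat{\Phi}(\rho))$ and the convex hull $\hat{H}_\Phi$ are continuous on the compact set $\mathfrak{S}(\mathcal{H}_A)$; and $\Delta_\Phi$ vanishes on pure states since a pure state has only the trivial decomposition into pure states, for which the bracket in \eqref{delta-rep} is $0-0$. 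Properties 1) and 2) I would obtain by repeating the proofs of Theorem~\ref{coin-crit} and Theorem~\ref{e-b-ch},\,B) pointwise, with $\bar{C}(\Phi)$, $C_{\mathrm{ea}}(\Phi)$ and "optimal ensemble" replaced by $\chi^{}_\Phi(\rho)$, $I(\rho,\Phi)$ and "$\chi^{}_\Phi$-optimal ensemble with average $\rho$". For 1): $\Theta\circ\widehat{\Phi}(\rho_i)=\rho_i$ for all $i$ gives $\Theta\circ\widehat{\Phi}(\rho)=\rho$, monotonicity of the relative entropy forces $H(\rho_i\|\rho)=H(\widehat{\Phi}(\rho_i)\|\widehat{\Phi}(\rho))$, so $\Delta_\Phi(\rho)\le 0$ by \eqref{delta-rep}, hence $=0$, and the ensemble attains the minimum there and is $\chi^{}_\Phi$-optimal by Remark~\ref{opt}. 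For 2): $\Delta_\Phi(\rho)=0$ and Remark~\ref{opt} make every $\chi^{}_\Phi$-optimal ensemble with average $\rho$ attain that minimum, so $H(\rho_i\|\rho)=H(\widehat{\Phi}(\rho_i)\|\widehat{\Phi}(\rho))$ and the Petz theorem yields the stated recovery channel; running the proof of Theorem~\ref{e-b-ch},\,B) for the restriction $\Phi|_{\mathfrak{S}(\mathcal{H}_\rho)}$ (here $\mathcal{H}_\rho=\supp\rho$, the $\rho_i$ span $\mathcal{H}_\rho$, and $\rho$ is a full-rank $\chi^{}_\Phi$-optimal average there) shows this subchannel is classical-quantum; and feeding the recovery channel together with each reweighted ensemble $\{\lambda_i,\rho_i\}$ back into 1) gives $\Delta_\Phi(\sum_i\lambda_i\rho_i)=0$.

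For property 3) the plan is to split $\Delta_\Phi(\rho)=[H(\rho)-H(\widehat{\Phi}(\rho))]+\hat{H}_\Phi(\rho)$ and treat the two summands on the simplex $S=\{\sum_i\lambda_i\rho_i:\lambda_i\ge0,\ \sum_i\lambda_i=1\}$ of a $\chi^{}_\Phi$-optimal ensemble $\{\pi_i,\rho_i\}$ (we may take all $\pi_i>0$). The first summand is concave on all of $\mathfrak{S}(\mathcal{H}_A)$: with a Stinespring isometry $V\colon\mathcal{H}_A\to\mathcal{H}_E\otimes\mathcal{H}_B$ of $\widehat{\Phi}$ one has $H(\rho)-H(\widehat{\Phi}(\rho))=H(V\rho V^*)-H(\Tr_{\mathcal{H}_B}(V\rho V^*))=-H(V\rho V^*\|I_{\mathcal{H}_B}\otimes\widehat{\Phi}(\rho))$, which is concave in $\rho$ by joint convexity of the relative entropy ($\rho\mapsto V\rho V^*$ is linear; equivalently, conditional entropy is concave). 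The second summand is \emph{affine} on $S$: $\hat{H}_\Phi$ is convex and $\hat{H}_\Phi(\rho_i)=H(\Phi(\rho_i))$, while $\chi^{}_\Phi$-optimality with \eqref{chi-fun+} forces $\hat{H}_\Phi(\bar{\rho})=\sum_i\pi_i H(\Phi(\rho_i))$ at the relative-interior point $\bar{\rho}=\sum_i\pi_i\rho_i$; expressing $\bar{\rho}$ as a convex combination of an arbitrary point of $S$ and a companion point of $S$ and using convexity of $\hat{H}_\Phi$ forces $\hat{H}_\Phi(\sum_i\lambda_i\rho_i)=\sum_i\lambda_i H(\Phi(\rho_i))$ throughout $S$. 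Hence $\Delta_\Phi$ is concave ($=$ concave $+$ affine) on $S$; the obstruction to global concavity noted in the footnote is exactly that $\hat{H}_\Phi$, being a convex hull, need not be affine off such simplices.

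Properties 4) and 5) I would read off from $\Delta_\Phi(\rho)=H(\rho)-\chi^{}_{\widehat{\Phi}}(\rho)$. For 4): with Stinespring isometries of $\Phi$ and $\Psi$ having environments $\mathcal{H}_{E_1}$ and $\mathcal{H}_{E_2}$, the channel $\Psi\circ\Phi$ has environment $\mathcal{H}_{E_2}\otimes\mathcal{H}_{E_1}$ and a direct check gives $\widehat{\Phi}=\Tr_{\mathcal{H}_{E_2}}\circ\widehat{\Psi\circ\Phi}$; the chain rule (post-composition does not increase $\chi^{}$) then gives $\chi^{}_{\widehat{\Phi}}(\rho)\le\chi^{}_{\widehat{\Psi\circ\Phi}}(\rho)$, i.e. $\Delta_{\Psi\circ\Phi}(\rho)\le\Delta_\Phi(\rho)$. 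For 5): $\widehat{\Phi\otimes\Psi}=\widehat{\Phi}\otimes\widehat{\Psi}$ and $H$ is additive on product states, so the inequality reduces to the always-valid superadditivity $\chi^{}_{\widehat{\Phi}\otimes\widehat{\Psi}}(\rho\otimes\sigma)\ge\chi^{}_{\widehat{\Phi}}(\rho)+\chi^{}_{\widehat{\Psi}}(\sigma)$ (take the product of optimal ensembles and use additivity of the relative entropy over tensor products); if the Holevo capacity is strongly additive for $\Phi$ and $\Psi$ then, since $\hat{H}_\Phi\equiv\hat{H}_{\widehat{\Phi}}$, it is strongly additive for $\widehat{\Phi}$ and $\widehat{\Psi}$, so the last inequality becomes an equality and so does 5).

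I expect the main obstacle to be property 3): spotting the decomposition $\Delta_\Phi=(\text{globally concave})+(\text{affine on the optimal simplex})$, and in particular proving that the convex hull $\hat{H}_\Phi$ is affine on $S$ — this is the only genuinely new convex-geometric step, and it is precisely what fails on all of $\mathfrak{S}(\mathcal{H}_A)$. A secondary point to watch is the classical-quantum claim in 2): the proof of Theorem~\ref{e-b-ch},\,B) must be run for $\Phi|_{\mathfrak{S}(\mathcal{H}_\rho)}$ with the chosen $\chi^{}_\Phi$-optimal ensemble playing the role of the full-rank-average optimal ensemble, and one must verify that the recovered resolution of identity $\sum_k|\phi_k\rangle\langle\phi_k|=I_{\mathcal{H}_\rho}$ is by an orthonormal basis (a short trace/Cauchy--Schwarz argument once the ensemble has $\dim\mathcal{H}_\rho$ elements).
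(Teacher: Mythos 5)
Your proposal is correct and, for the most part, follows the same route as the paper: the preamble facts from \eqref{delta-rep}, properties 1) and 2) by pointwise repetition of the arguments of Theorem~\ref{coin-crit} (monotonicity of relative entropy, Remark~\ref{opt}, the Petz theorem) and of Theorem~\ref{e-b-ch},\,B), property 4) via a channel $\Theta$ with $\widehat{\Phi}=\Theta\circ\widehat{\Psi\circ\Phi}$ and the chain rule for the $\chi$-function (the paper cites the proof of Lemma~17 in \cite{R&Co} for the existence of $\Theta$, you simply exhibit it as the partial trace over the second environment, which is fine), and property 5) via $\widehat{\Phi\otimes\Psi}=\widehat{\Phi}\otimes\widehat{\Psi}$ and superadditivity of the $\chi$-function on product states, with the equality case transferred to the complementary channels through $\hat{H}_\Phi\equiv\hat{H}_{\widehat{\Phi}}$ exactly as the paper intends. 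The one genuine divergence is in property 3): the paper also uses the decomposition $\Delta_\Phi(\rho)=\bigl[H(\rho)-H(\widehat{\Phi}(\rho))\bigr]+\hat{H}_\Phi(\rho)$ and proves concavity of the bracket from the identity $H(\bar{\rho})-\sum_i\pi_iH(\rho_i)=\sum_i\pi_iH(\rho_i\|\bar{\rho})$ plus monotonicity (equivalent to your conditional-entropy/joint-convexity argument), but it disposes of the crucial step --- affinity of $\hat{H}_\Phi$ on the simplex of a $\chi^{}_\Phi$-optimal ensemble --- by identifying $\hat{H}_\Phi$ with the double Fenchel transform of $H\circ\Phi$ and invoking Proposition~1 of \cite{A&B}, whereas you prove it directly: since all $\pi_i>0$, any distribution $\lambda$ admits $\pi=t\lambda+(1-t)\mu$ with $t>0$, and sandwiching $\hat{H}_\Phi(\bar{\rho})=\sum_i\pi_iH(\Phi(\rho_i))$ (which is forced by $\chi^{}_\Phi$-optimality and \eqref{chi-fun+}) between the two applications of convexity yields $\hat{H}_\Phi\bigl(\sum_i\lambda_i\rho_i\bigr)=\sum_i\lambda_iH(\Phi(\rho_i))$. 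This companion-point argument is valid and makes the proof self-contained and elementary; what the paper's citation buys is brevity and a general statement about convex hulls, what yours buys is independence from \cite{A&B}. Your proviso that the ensemble weights be strictly positive is the right (and necessary) reading of the statement, consistent with the paper's implicit convention, and your flagged verification in 2) (orthonormality of the recovered resolution of the identity on $\mathcal{H}_\rho$) is exactly the point at which the paper, too, simply reruns the proof of Theorem~\ref{e-b-ch},\,B).
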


\begin{proof}
1) This property follows from monotonicity of the
relative entropy and the remark before Definition~\ref{chi-opt}.\smallskip

2) The first assertion follows from the Petz theorem~\cite[Theorem 3]{Petz} characterizing the case in which monotonicity of the
relative entropy holds with an equality.

The second assertion is derived from the first  one by using the arguments from
the proof of Theorem~\ref{e-b-ch},B\,).

The third assertion follows from the first one and property 1).\smallskip

3) Since $\hat{H}_\Phi\equiv\hat{H}_{\widehat{\Phi}}$, representation \eqref{chi-fun+} for the function
$\chi^{}_{\widehat{\Phi}}$ implies
$$
\Delta_\Phi(\rho)=\left[H(\rho)-H(\widehat{\Phi}(\rho))\right]+\hat{H}_\Phi(\rho).
$$

By the identity $H(\bar{\rho})-\sum_i\pi_iH(\rho_i)=\sum_i\pi_iH(\rho_i\|\bar{\rho})$,
where $\bar{\rho}=\sum_i\pi_i\rho_i$, concavity of the term in
the square brackets on the set $\mathfrak{S}(\mathcal{H}_A)$ follows from monotonicity of the relative entropy. So, to prove this
assertion it suffices to show that the function $\hat{H}_{\Phi}$ is
affine on the set
$\Bigl\{\sum_i\lambda_i\rho_i\mid\sum_i\lambda_i=1,\:
\lambda_i\ge0\Bigr\}$. This
can be done by noting that the function $\hat{H}_{\Phi}$ coincides
with the double Fenchel transform of the function $H\circ\Phi$ and
by using Proposition 1 in~\cite{A&B}.\smallskip

4) By using the Stinespring representation it is easy to show (see~\cite[the proof of Lemma 17]{R&Co}) that there exists a channel
$\Theta$ such that
$\widehat{\Phi}=\Theta\circ\widehat{\Psi\circ\Phi}$. Hence
the chain rule for the $\chi$-function implies
$$
\Delta_{\Psi\circ\Phi}(\rho)=H(\rho)-\chi_{\widehat{\Psi\circ\Phi}}(\rho)\le
H(\rho)-\chi^{}_{\widehat{\Phi}}(\rho)=\Delta_\Phi(\rho).
$$

5) Since $\widehat{\Phi\otimes\Psi}=\widehat{\Phi}\otimes\widehat{\Psi}$ (see
\cite{H-comp-ch}), this assertion follows from the obvious inequality $\chi_{\widehat{\Phi}\otimes\widehat{\Psi}}(\rho\otimes\sigma)
\ge\chi^{}_{\widehat{\Phi}}(\rho)+\chi_{\widehat{\Psi}}(\sigma)$, which is satisfied with an equality if the strong additivity of the
Holevo capacity holds for the channels $\Phi$ and $\Psi$~\cite{H-Sh-1}.
\end{proof}\smallskip

The following proposition shows the sense of the maximal value of
the function~$\Delta_\Phi$.\smallskip

\begin{property}\label{delta-fun-max}
\emph{Let \/ $\Phi\colon\mathfrak{S}(\mathcal{H}_A)\to\mathfrak{S}(\mathcal{H}_B)$ be a quantum channel. Then
\begin{equation}\label{max=sup}
\max_{\rho\in\mathfrak{S}(\mathcal{H}_A)}\Delta_\Phi(\rho)=\sup_{H,h}
\left[C_{\mathrm{ea}}(\Phi,H,h)-\bar{C}(\Phi,H,h)\right],
\end{equation}
where the supremum is over all pairs (positive operator $H\in\mathfrak{B}(\mathcal{H}_A)$, $h>0$).}
\end{property}

\begin{proof}
For given $H$ and $h$ let $\rho$ be a state in $\mathfrak{S}(\mathcal{H}_A)$ such that $\Tr H\rho\le h$ and
$C_{\mathrm{ea}}(\Phi,H,h)=I(\rho,\Phi)$. Since $\bar{C}(\Phi,H,h)\ge\chi^{}_\Phi(\rho)$, we have
$$
\Delta_\Phi(\rho)=I(\rho,\Phi)-\chi^{}_\Phi(\rho)\ge
C_{\mathrm{ea}}(\Phi,H,h)-\bar{C}(\Phi,H,h),
$$
This implies ``\,$\ge$\,'' in~\eqref{max=sup}.

Let $\varepsilon>0$ be arbitrary and $\rho_{\varepsilon}$ be a full
rank state in $\mathfrak{S}(\mathcal{H}_A)$ such that
$\Delta_\Phi(\rho_{\varepsilon})
\ge\max_{\rho\in\mathfrak{S}(\mathcal{H}_A)}\Delta_\Phi(\rho)-\varepsilon$. By Lemma 1 in~\cite{H-Sh-1}
there exists a pair $(H,h)$ such that $\Tr H\rho_{\varepsilon}\le h$ and
$\bar{C}(\Phi,H,h)=\chi^{}_\Phi(\rho_{\varepsilon})$. Since $C_{\mathrm{ea}}(\Phi,H,h)\ge
I(\rho_{\varepsilon},\Phi)$, we have
$$
C_{\mathrm{ea}}(\Phi,H,h)-\bar{C}(\Phi,H,h)\ge
I(\rho_{\varepsilon},\Phi)-\chi^{}_\Phi(\rho_{\varepsilon})=
\Delta_\Phi(\rho_{\varepsilon})
\ge\max_{\rho\in\mathfrak{S}(\mathcal{H}_A)}\Delta_\Phi(\rho)-\varepsilon,
$$
which implies ``\,$\le$\,'' in~\eqref{max=sup}.
\end{proof}

It is easy to see that $\max_{\rho\in\mathfrak{S}(\mathcal{H}_A)}\Delta_\Phi(\rho)
\in[0,\log\dim\mathcal{H}_A]$. If $\Delta_\Phi(\rho)\equiv0$ then the condition of Proposition~\ref{e-b-ch-c+} holds. If
$\max_{\rho\in\mathfrak{S}(\mathcal{H}_A)}\Delta_\Phi(\rho)
=\log\dim\mathcal{H}_A$ then $\Phi$
is unitary equivalent to the channel $\rho\mapsto\rho\otimes\sigma$, where $\sigma$ is a given state.
Indeed, this implies $\chi^{}_{\widehat{\Phi}}(\rho_c)=0$, where $\rho_c$ is the chaotic state in $\mathfrak{S}(\mathcal{H}_A)$, and hence
$\chi^{}_{\widehat{\Phi}}(\rho)\equiv0$ by concavity and nonnegativity of the $\chi$-function, which means that~$\widehat{\Phi}$~is a completely depolarizing channel.\medskip

\begin{remark}\label{delta-fun-max-r}
Subadditivity of the function
$\Delta_\Phi$ (property 5 in Theorem~\ref{delta-fun-p}) implies
existence of the regularization
$\Delta^*_\Phi(\rho)=\lim_{n\to+\infty}n^{-1}\Delta_{\Phi^{\otimes
n}}(\rho^{\otimes n})$. By repeating the arguments from the proof of
Proposition \ref{delta-fun-max} and by using subadditivity of the
quantum mutual information it is easy to show that
$$
\max_{\rho\in\mathfrak{S}(\mathcal{H}_A)}\Delta^*_\Phi(\rho)\ge\sup_{H,h}
\left[C_{\mathrm{ea}}(\Phi,H,h)-C(\Phi,H,h)\right].
$$
The equality in this inequality is obvious if the strong additivity of the Holevo
capacity holds for the channel $\Phi$  (see \cite{H-Sh-1}), but it seems to be not valid in general.
\end{remark}\smallskip

Let $\Phi\colon\mathfrak{S}(\mathcal{H}_A)\to\mathfrak{S}(\mathcal{H}_B)$
and $\Psi\colon\mathfrak{S}(\mathcal{H}_B)\to\mathfrak{S}(\mathcal{H}_C)$ be quantum channels. Monotonicity
of the function $\Delta_\Phi$ (property 4 in Theorem
~\ref{delta-fun-p}) shows that the inequality
$$
C_{\mathrm{ea}}(\Psi\circ\Phi,H,h)-\bar{C}(\Psi\circ\Phi,H,h)\le
C_{\mathrm{ea}}(\Phi,H,h)-\bar{C}(\Phi,H,h)
$$
is valid if the functions $\rho\mapsto I(\rho,\Psi\circ\Phi)$ and $\rho\mapsto\chi^{}_\Phi(\rho)$ have common maximum point under the
condition $\Tr H\rho\le
h$ (this holds for the unconstrained
channels $\Phi$ and $\Psi$ satisfying the covariance condition~\eqref{cov} with $\mathcal{H}_A=\mathcal{H}_B$ and $V_g=W_g$).

In general validity of the above  inequality is an interesting open
question, but monotonicity of the function $\Delta_\Phi$ and
Proposition~\ref{delta-fun-max} imply the following
observation.\smallskip

\begin{corollary}\label{delta-fun-max-c}
\emph{Let \/ $\Phi\colon\mathfrak{S}(\mathcal{H}_A)\to\mathfrak{S}(\mathcal{H}_B)$ and \/
$\Psi\colon\mathfrak{S}(\mathcal{H}_B)\to\mathfrak{S}(\mathcal{H}_C)$ be arbitrary quantum channels.
Then}
$$
\sup_{H,h}
\left[C_{\mathrm{ea}}(\Psi\circ\Phi,H,h)-\bar{C}(\Psi\circ\Phi,H,h)\right]\le\sup_{H,h}
\left[C_{\mathrm{ea}}(\Phi,H,h)-\bar{C}(\Phi,H,h)\right].
$$
\end{corollary}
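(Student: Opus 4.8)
The plan is to reduce the inequality between constrained capacities to the pointwise comparison $\Delta_{\Psi\circ\Phi}\le\Delta_\Phi$ that is already available, by way of the variational identity of Proposition~\ref{delta-fun-max}. The key observation is that both sides of the asserted inequality are literally instances of that identity: the supremum over all constraint pairs $(H,h)$ of the gap $C_{\mathrm{ea}}-\bar{C}$ is nothing but the maximum over input states of the function $\Delta$ of the channel in question.

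Concretely, I would argue in three steps. First, apply Proposition~\ref{delta-fun-max} to the composite channel $\Psi\circ\Phi\colon\mathfrak{S}(\mathcal{H}_A)\to\mathfrak{S}(\mathcal{H}_C)$, which is again a quantum channel, to obtain
$$
\sup_{H,h}\left[C_{\mathrm{ea}}(\Psi\circ\Phi,H,h)-\bar{C}(\Psi\circ\Phi,H,h)\right]=\max_{\rho\in\mathfrak{S}(\mathcal{H}_A)}\Delta_{\Psi\circ\Phi}(\rho).
$$
Second, invoke property~4 (monotonicity) of Theorem~\ref{delta-fun-p}, which gives $\Delta_{\Psi\circ\Phi}(\rho)\le\Delta_\Phi(\rho)$ for every $\rho\in\mathfrak{S}(\mathcal{H}_A)$, and therefore
$$
\max_{\rho\in\mathfrak{S}(\mathcal{H}_A)}\Delta_{\Psi\circ\Phi}(\rho)\le\max_{\rho\in\mathfrak{S}(\mathcal{H}_A)}\Delta_\Phi(\rho).
$$
Third, apply Proposition~\ref{delta-fun-max} once more, now to $\Phi$ itself, to identify the right-hand side of the previous display with $\sup_{H,h}\left[C_{\mathrm{ea}}(\Phi,H,h)-\bar{C}(\Phi,H,h)\right]$. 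Chaining the three relations gives the corollary.

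Since the entire substance has been absorbed into Proposition~\ref{delta-fun-max} and property~4 of Theorem~\ref{delta-fun-p}, I do not expect any serious obstacle. The only point worth remarking is that Proposition~\ref{delta-fun-max} is stated for an arbitrary channel, so it applies verbatim to $\Psi\circ\Phi$, whose input space is the same $\mathcal{H}_A$; hence the two maxima over $\mathfrak{S}(\mathcal{H}_A)$ live on a common set and the comparison is meaningful. If one wished to avoid quoting Proposition~\ref{delta-fun-max}, one could instead unfold its short proof: take a full-rank state $\rho_\varepsilon$ with $\Delta_{\Psi\circ\Phi}(\rho_\varepsilon)\ge\max_\rho\Delta_{\Psi\circ\Phi}(\rho)-\varepsilon$, use Lemma~1 of~\cite{H-Sh-1} to pick $(H,h)$ with $\Tr H\rho_\varepsilon\le h$ and $\bar{C}(\Phi,H,h)=\chi^{}_\Phi(\rho_\varepsilon)$, and then estimate $C_{\mathrm{ea}}(\Phi,H,h)-\bar{C}(\Phi,H,h)\ge I(\rho_\varepsilon,\Phi)-\chi^{}_\Phi(\rho_\varepsilon)=\Delta_\Phi(\rho_\varepsilon)\ge\Delta_{\Psi\circ\Phi}(\rho_\varepsilon)$, letting $\varepsilon\to0$. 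The three-line chaining above is cleaner, though.
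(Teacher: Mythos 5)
Your proof is correct and is exactly the argument the paper intends: the corollary is stated there as an immediate consequence of Proposition~\ref{delta-fun-max} (applied to both $\Psi\circ\Phi$ and $\Phi$) together with the monotonicity property $\Delta_{\Psi\circ\Phi}\le\Delta_\Phi$ from Theorem~\ref{delta-fun-p}. Nothing further is needed.
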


By introducing  the parameter
$$
D(\Phi)=\sup_{H,h} \left[C_{\mathrm{ea}}(\Phi,H,h)-\bar{C}(\Phi,H,h)\right]
$$
of the channel $\Phi\colon\mathfrak{S}(\mathcal{H}_A)\to\mathfrak{S}(\mathcal{H}_B)$ the above
observations can be reformulated as follows:
\smallskip
\begin{itemize}
\item
$D(\Phi)=\max_{\rho\in\mathfrak{S}(\mathcal{H}_A)}\Delta_\Phi(\rho)$;
\item
$D(\Psi\circ\Phi)\le D(\Phi)$ for any channel
$\Psi\colon\mathfrak{S}(\mathcal{H}_B)\to\mathfrak{S}(\mathcal{H}_C)$;
\item
$D(\Phi)\in[0,\log\dim\mathcal{H}_A]$;
\item
$D(\Phi)=\log\dim\mathcal{H}_A$ if and only if the channel $\Phi$ is unitary equivalent to the noiseless channel $\rho\mapsto\rho\otimes\sigma$, where $\sigma$ is a given state;
\item
$D(\Phi)=0$ if $\Phi$ is a completely depolarizing channel ("if and only if" provided the Conjecture at the end of Section~3 is true).
\end{itemize}

\smallskip
The above properties  show that the parameter $D(\Phi)$ can be
considered as one of characteristics of the channel $\Phi$
describing its "level of noise". Unfortunately, this parameter seems
not to be easily calculated for nontrivial examples of quantum
channels.\medskip

Generalizations of the results obtained in this paper to infinite dimensional constrained channels are presented in the second part of~\cite{Sh}.

\bigskip
I am grateful to A.S.Holevo and to the participants of his seminar
"Quantum probability, statistic, information" (the Steklov
Mathematical Institute) for useful discussion.

\smallskip

The work is supported in part by the
Scientific Program ``Mathematical Control Theory and Dynamic Systems'' of the Russian
Academy of Sciences and the Russian Foundation for Basic Research, projects
10-01-00139-a and~12-01-00319-a.

\end{document}